\documentclass{article}
	
\usepackage[letterpaper,margin=1.5in]{geometry}
\usepackage{amsmath, amssymb, amsthm, amsfonts,bm}
\usepackage[inline]{enumitem}
\usepackage{csquotes}
\usepackage{hhline}

\usepackage{soul}
\usepackage{cite}
\usepackage{framed}
\usepackage[framemethod=tikz]{mdframed}
\usepackage{appendix}
\usepackage{graphicx}
\usepackage{color}
\usepackage{wrapfig}
\usepackage{multirow}
\usepackage{algorithm, subcaption}
\usepackage[noend]{algpseudocode}
\usepackage[textsize=tiny]{todonotes}
\usepackage{enumitem}
%



\usepackage{xspace}

\definecolor{darkgreen}{rgb}{0,0.5,0}
\definecolor{darkblue}{rgb}{0,0,0.8}
\usepackage{hyperref}
\hypersetup{
    unicode=false,          
    colorlinks=true,        
    linkcolor=darkblue,          
    citecolor=darkgreen,        
    filecolor=magenta,      
    urlcolor=cyan           
}
\RequirePackage[]{silence}
\WarningsOff[hyperref]

\usepackage[nameinlink,capitalize]{cleveref}

\theoremstyle{theorem}
\newtheorem{theorem}{Theorem}[section]
\theoremstyle{lemma}
\newtheorem{lemma}[theorem]{Lemma}
\theoremstyle{corollary}
\newtheorem{corollary}[theorem]{Corollary}
\theoremstyle{claim}

\theoremstyle{definition}
\newtheorem{definition}{Definition}[section]
\theoremstyle{remark}

\newcommand{\mixing}[1]{\ensuremath{\tau_{#1}}\xspace}
\newcommand{\mixingG}{\mixing{G}}

\newcommand{\Var}[1]{{\textup{V}}\left [ #1 \right ]}

\newcommand{\Expect}[1]{ \textup{E}\left [ #1 \right ]}

\newcommand{\ignore}[1]{}

\algnewcommand\algorithmicswitch{\textbf{switch}}
\algnewcommand\algorithmiccase{\textbf{case}}

\algdef{SE}[SWITCH]{Switch}{EndSwitch}[1]{\algorithmicswitch\ #1\ \algorithmicdo}{\algorithmicend\ \algorithmicswitch}%
\algdef{SE}[CASE]{Case}{EndCase}[1]{\algorithmiccase\ #1}{\algorithmicend\ \algorithmiccase}%
\algtext*{EndSwitch}%
\algtext*{EndCase}%

\newcommand{\eps}{\epsilon}

\newcommand{\congest}{\ensuremath{\mathsf{CONGEST}\xspace}}
\newcommand{\gossip}{\ensuremath{\mathsf{GOSSIP}\xspace}}

\newcommand{\poly}{\operatorname{\text{{\rm poly}}}}
\newcommand{\polylog}{\operatorname{\text{{\rm polylog}}}}
\newcommand{\floor}[1]{\lfloor #1 \rfloor}

\newcommand{\Prob}[1]{\Pr\left(#1\right)}

\newcommand{\ID}{\operatorname{ID}}

\newcommand{\depth}{\operatorname{\text{\tt depth}}}
\newcommand{\roundup}[1]{ \lceil #1 \rceil}

\newcommand{\size}[1]{\left| #1 \right|}
\newcommand{\indicator}[1]{I\left[ #1 \right]}
\newcommand{\val}[1]{\operatorname{val}\left( #1 \right)}
\newcommand{\head}[1]{\operatorname{head}\left( #1 \right)}
\newcommand{\tail}[1]{\operatorname{tail}\left( #1 \right)}

\renewcommand{\paragraph}[1]{\vspace{0.15cm}\noindent {\bf #1}:}



%
%


\sloppy
\begin{document}

\date{}

\title{Distributed Data Summarization in Well-Connected Networks} 

\author{
Hsin-Hao Su\\
	 Boston College \\
	 suhx@bc.edu 
\and
Hoa T.~Vu\\
	 Boston College \\
	 vuhd@bc.edu
}

\maketitle


\pagenumbering{gobble}

\begin{abstract}
\normalsize
We study distributed algorithms for some fundamental problems in data summarization. Given a communication graph $G$ of $n$ nodes each of which may hold a value initially, we focus on computing $\sum_{i=1}^N g(f_i)$, where $f_i$ is the number of occurrences of value $i$ and $g$ is some fixed function. This includes important statistics such as the number of distinct elements, frequency moments, and the empirical entropy of the data. 

In the $\congest$ model, a simple adaptation from streaming lower bounds shows that it requires $\tilde{\Omega}(D+ n)$ rounds, where $D$ is the diameter of the graph, to compute some of these statistics exactly. However, these lower bounds do not hold for graphs that are well-connected. We give an algorithm that computes $\sum_{i=1}^{N} g(f_i)$  exactly in $\mixingG \cdot 2^{O(\sqrt{\log n})}$ rounds where $\mixingG$ is the mixing time of $G$. This also has applications in computing the top $k$ most frequent elements.

 We demonstrate that there is a high similarity between the $\gossip$ model and the $\congest$ model in well-connected graphs. In particular, we show that each round of the $\gossip$ model can be simulated almost perfectly in $\tilde{O}(\mixingG)$ rounds of the $\congest$ model.  To this end,  we develop a new algorithm for the $\gossip$ model that $1\pm \eps$ approximates the $p$-th frequency moment $F_p = \sum_{i=1}^N f_i^p$ in $\tilde{O}(\eps^{-2} n^{1-k/p})$ rounds \footnote{$\tilde{O}$ omits $\polylog(n)$ factors.}, for $p \geq 2$, when the number of distinct elements $F_0$ is at most $O\left(n^{1/(k-1)}\right)$. This result can be translated back to the $\congest$ model with a factor $\tilde{O}(\mixingG)$ blow-up in the number of rounds.
 \end{abstract}
 \pagebreak
 \pagenumbering{arabic}
 \setcounter{page}{1}



\section{Introduction}

\paragraph{Motivation} Analyzing massive datasets has become an increasingly important and challenging problem. Collecting the entire data to one single machine is usually infeasible due to  memory, I/O, or network bandwidth constraints. Furthermore, in many cases,  data are distributed over the network and we hope to  aggregate some of their properties efficiently.  In this work, we consider several fundamental  data summarization problems in distributed networks, specifically in the $\congest$ and $\gossip$ models.

In this problem, we have a graph $G=(V,E)$ of $n$ nodes. Each node $v$ in the graph may hold a value $\val{v}$ in the range $ \{ 1,\ldots,N\} \cup \{ \textup{NULL} \}$ where $\textup{NULL}$ simply means that the node does not hold a value. If $\val{v}= \textup{NULL}$, we call $v$ an {\em empty node}. 

We often use the notation $[N] := \{1,\ldots,N\}$. Let $f_i$ be the number of  nodes that hold value $i$, i.e., $f_i = \size{\{ v \in V : \val{v} = i \}}$. We want to compute  $
\sum_{i=1}^N g(f_i)$  for some fixed function $g$. To demonstrate some important cases, consider the following examples.

Consider $g(f_i) = 1$ if $f_i > 0$ and 0 otherwise. This corresponds to the problem of counting the number of distinct elements (or computing the 0-th frequency moment $F_0$). The problem may arise in the following situation: Each node stores a version of a file (e.g.~the hash of a blockchain), and we want to know how many different versions there are in the network. 

If $g(f_i) = f_i^p$ for some fixed $p =2,3,\ldots$, then this corresponds to the problem of computing the $p$-th frequency moment $F_p$. We note that $F_p$ is a basic, yet very important
statistic of a dataset. $F_2$ measures the variance and could be used to estimate the size of a self-join in database applications. For higher $p$, $F_p$ measures the skewness  of the dataset (see \cite{AlonMS99}). Note that $F_1$ can be computed in $O(D)$ rounds by aggregating along a breath-first-search (BFS) tree.

Another example is $g(f_i) = - (f_i/F_1) \cdot \log (f_i/F_1)$.  In this case, the sum is the empirical entropy of the data.  Computing the empirical entropy is motivated by network applications such as detecting anomalies \cite{GuMT05,XuZB05,WagnerP05}.


\paragraph{Models} We now give  a formal description of the $\congest$ and $\gossip$ models, where the running time of an algorithm is measured by the number of rounds. 

 \begin{definition}
In the $\congest$ model, we are given a graph $G = (V,E)$ of $n$ nodes,  in each synchronous round, each node can talk (send and receive message) to each of its neighbors and then perform local computations. Each message is restricted to be at most $O(\log n)$ bits. 
\end{definition}
 
 \begin{definition}
In the $\mathsf{GOSSIP}(\lambda)$ model with $n$ nodes, in each synchronous round, each node $u$ samples a node $t(u)$ from a distribution that satisfies the following: For any node $v$ and any subset of nodes $Z$ where $u \notin Z$,
$$\Pr\left(t(u) = v \middle\vert \bigwedge_{z \in Z} t(z) \right)  \in \left[\frac{1-\lambda}{n}, \frac{1+\lambda}{n} \right].$$
In the above, ``$\bigwedge_{z \in Z} t(z)$'' means conditioning on any assignment of each $t(z)$ for $z \in Z$. Then, $u$ can  PUSH a message of size $O(\log n)$ to $t(u)$ or PULL a message of size $O(\log n)$ from $t(u)$. Then, after performing some local computations, it proceeds to the next round. We refer $\mathsf{GOSSIP}$ model as the $\mathsf{GOSSIP}(0)$ model.
\end{definition}

\subsection{Our results} 

We organize our main results into three categories: a) results in the $\congest$ model, b) an emulation of the $\gossip$ model in the $\congest$ model, and c) results in the $\gossip$ model. 

\paragraph{Results in the $\congest$ model} We briefly show how to adapt streaming algorithms to approximate $F_p$ (for $p = 0,2,3,\ldots)$ in the $\congest$ model. We also demonstrate some  lower bounds and conditional lower bounds that give evidence that such algorithms are optimal or near-optimal.

  The lower bounds show that computing $F_p$ exactly for $p =0,2,3, \ldots$ requires $\tilde{\Omega}(D + n)$ rounds and approximating $F_p$ within a constant factor requires {\it polynomial} rounds in $n$ for $p \geq 3$. Roughly speaking, the hard instances in the $\congest$ model are graphs with a small balanced cut of $O(1)$ size that causes an information bottleneck. However,  such bottleneck does not occur in graphs that are well-connected. Our first main result aims to answer the following question: {\em Could one design more efficient algorithms for well-connected graphs?} We give a positive answer to this question.

By using the permutation routing algorithms of Ghaffari et al.~\cite{GhaffariKS17} (later improved by Ghaffari and Li \cite{GhaffariL18}), we show that there exists an algorithm running in $\mixingG \cdot 2 ^{O(\sqrt{ \log n})}$ rounds that computes $\sum_{i=1}^N g(f_i)$ for all fixed and computable functions $g$ with high probability (w.h.p.) \footnote{We consider $1-1/\poly(n)$ as high probability. }. This includes all the aforementioned  quantities such as the number of distinct elements, higher frequency moments, and the empirical entropy. Thus, if the graph has small mixing time such as expanders \cite{Goldreich11o, hoory2006expander}, where $\mixingG = \polylog(n)$, then we obtain a much more efficient sub-polynomial in $n$ algorithm compared to the adaptation of the streaming counterpart.

\begin{theorem}[{\em Main result 1}] \label{thm:main1}
There exists an algorithm that computes $\sum_{i=1}^N g(f_i)$ exactly for all (fixed and computable) functions $g$   in the $\congest$ model in $\mixingG \cdot 2^{O(\sqrt{\log n})} $ rounds w.h.p. 
\end{theorem}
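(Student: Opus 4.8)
The plan is to reduce the computation of $\sum_{i=1}^N g(f_i)$ to \emph{sorting} the non-empty nodes by their values and then performing a few segmented aggregations, each realized through the bounded-load routing primitive of Ghaffari, Kuhn, and Su~\cite{GhaffariKS17} (as improved in~\cite{GhaffariL18}). First I would reduce the sum to a computation over only the values that actually occur. Let $S = \set{i : f_i > 0}$ and $F_0 = \size{S}$; since at most $n$ nodes are non-empty we have $F_0 \le n$ and $\sum_{i \in S} f_i \le n$. Because $N$ is known to every node and $g$ is a fixed computable function,
\[
\sum_{i=1}^N g(f_i) = \sum_{i \in S} g(f_i) + (N - F_0)\, g(0),
\]
so it suffices to compute the single number $F_0$ and the sum $\sum_{i \in S} g(f_i)$. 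Both are determined by the multiset of frequencies of the values that appear, which is exactly what sorting will expose.

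Next I would invoke the routing primitive: any instance in which every node is the source and the destination of at most $\polylog(n)$ messages can be completed in $\mixingG \cdot 2^{O(\sqrt{\log n})}$ rounds w.h.p.~\cite{GhaffariKS17,GhaffariL18}. I would use this to (i) sort the non-empty nodes according to the key $(\val{v}, \ID(v))$, taking $\val{v}$ as the primary key and the distinct identifier as a tie-breaker so that all keys are distinct and equal values occupy a contiguous block, and (ii) treat the resulting sorted order as a virtual line on which adjacent ranks can exchange messages. Sorting reduces to the primitive because, once each node learns the rank of its key, delivering every key to the node of that rank is a single permutation routing; and making the keys distinct keeps the ranks balanced, so the intermediate counting/bucketing used to compute ranks never exceeds $\polylog(n)$ load per node. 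This costs $\polylog(n)$ invocations of the primitive, hence $\mixingG \cdot 2^{O(\sqrt{\log n})}$ rounds overall.

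Finally, with equal values laid out contiguously, I would recover each frequency and assemble the answer. Call a node the \emph{leader} of a block if its sorted predecessor holds a different value (detected by one exchange along the virtual line). By a segmented prefix-sum / pointer-jumping computation along the sorted order, $O(\log n)$ rounds, each a bounded-load relation, every leader learns the length $f_i$ of its block and locally evaluates $g(f_i)$. Summing $g(f_i)$ over leaders and counting the leaders (which yields $F_0$) are two global aggregations, after which the correction $(N - F_0)\, g(0)$ is added. Since the answer is reconstructed from exact block lengths, the output is exact whenever the randomized routing succeeds, i.e., w.h.p.

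The main obstacle is congestion from \emph{popular} values: a value with $f_i = \Theta(n)$ makes the naive ``send every copy to one representative'' strategy incur load $\Theta(n)$ at that node, which the bounded-load primitive cannot absorb. The reduction to sorting is precisely what sidesteps this, since copies of a heavy value are spread across a contiguous block of ranks rather than gathered at one node, and their count is recovered by aggregation along the virtual line so that no node handles more than $\polylog(n)$ messages per routing step. The remaining care is to verify that each intermediate step (rank computation, the pointer-jumping passes, and the final aggregations) is a legitimate $\polylog(n)$-relation so the \cite{GhaffariKS17,GhaffariL18} bound applies, and that the $\polylog(n)$ high-probability guarantees compose by a union bound.
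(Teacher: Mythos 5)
Your proposal follows essentially the same skeleton as the paper's proof: sort the values so that equal values occupy a contiguous block, detect block boundaries by exchanges with rank-neighbors, recover each $f_i$ from the geometry of its block, and finish with a BFS-tree aggregation, charging everything to $\polylog(n)$ invocations of the routing primitive of \cite{GhaffariKS17,GhaffariL18}. However, there is one step where you assert rather than prove, and it is exactly the step where the paper does real work. You implement sorting as ``compute the rank of each key, then route every key to the node of that rank,'' claiming that ``the intermediate counting/bucketing used to compute ranks never exceeds $\polylog(n)$ load per node.'' Computing all ranks under a $\polylog(n)$-load constraint is itself a nontrivial distributed sorting problem, and no mechanism is given; a naive count of smaller keys has load $\Theta(n)$, and a splitter-based sample sort (splitters via selection on the BFS tree, broadcast, bucket, recurse) would work but requires its own congestion argument. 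The paper sidesteps rank computation entirely by simulating an AKS sorting network \cite{AjtaiKS83}: the comparator schedule is data-oblivious and constructed locally by every node from $n$ alone, so each of the $O(\log n)$ comparator rounds is a fixed perfect matching of IDs, i.e., a single permutation routing, and nothing data-dependent ever has to be computed to drive the sort. You should replace your rank-then-route sort with this (or flesh out the splitter-based alternative); once you do, the rest of your argument goes through.

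The remaining differences are legitimate alternatives. Where you compute block lengths by segmented prefix sums / pointer jumping along the virtual sorted line ($O(\log n)$ additional routing calls, each a shift permutation on ranks, hence load $O(1)$ per node --- this is sound), the paper instead resets all non-boundary nodes to $-\infty$, runs a \emph{second} sort on the surviving head/tail tokens so that the two tokens of each value land on adjacent IDs, and lets the head read off $f_i = \tail{i} - \head{i} + 1$ in one exchange; the two mechanisms are interchangeable in both correctness and cost. One small point in your favor: you explicitly add the correction $(N - F_0)\, g(0)$ for values that never occur, whereas the paper's final aggregation computes $\sum_{i : f_i > 0} g(f_i)$ and leaves this term implicit; the distinction matters whenever $g(0) \neq 0$.
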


Our algorithm can also easily be extended to find the top  $k$ frequent elements in $O(k)+\mixingG \cdot 2 ^{O(\sqrt{ \log n})}$ rounds. 

%
%


\paragraph{From $\congest$ to $\gossip$} The  lower bounds do not apply directly to the $\gossip$ model either. This is because for any balanced cut of the nodes, one expects $O(n)$ messages to be sent across in one round. Moreover, the expected communication degree per node in the $\gossip$ model is $O(1)$. Intuitively, the graph formed by the communication pattern in the $\gossip$ model is similar to an expander graph.

 In fact, we show that well-connected graphs can emulate  the $\gossip$ model efficiently. In particular, one round of the $\gossip(1/\poly(n))$ model can be emulated in $\mixingG \cdot \polylog(n)$ rounds in the $\congest$ model where the underlying graph is $G$.  Therefore, any algorithm that works in the $\gossip(1/\poly(n))$ model can be turned into an algorithm in the $\congest$ model with an $\tilde{O}(\mixingG)$ factor blow-up. 

Consider our results in the $\congest$ model. The permutation routing algorithms of \cite{GhaffariKS17} and \cite{GhaffariL18} introduces a super-logarithmic factor, $2^{O(\sqrt{\log n})}$, on top of the mixing time. It becomes the bottleneck in graphs with small mixing times (e.g.,~expanders). Improving the permutation routing algorithm directly yields improvements to our results in the $\congest$ model (and many other problems). However, it is unclear if it can be improved.
This emulation result serves as an alternative route to circumvent the $2^{O(\sqrt{\log n})}$ factor, if one develops efficient $\gossip$ algorithms.



\begin{theorem}[{\em Main result 2}]\label{thm:gossipemu}
For $\lambda = 1/\poly(n)$, one round of the $\mathsf{GOSSIP}(\lambda)$ model can be emulated in $\tilde{O}(\tau_G)$ rounds in the $\mathsf{CONGEST}$ model where $G$ is a connected graph denoting the underlying network.\end{theorem}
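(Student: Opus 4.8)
The plan is to reduce one round of $\gossip(\lambda)$ to a single \emph{near-uniform token-routing} task and to solve that task with independent random walks of length $L=\Theta(\mixingG\log n)$. In one round of $\gossip(\lambda)$ each node $u$ must (i) obtain a partner $t(u)$ whose conditional law lies in $[(1-\lambda)/n,(1+\lambda)/n]$ and (ii) exchange one $O(\log n)$-bit message with $t(u)$ (PUSH or PULL). I would give each node a token carrying its $\ID$ (and, for PUSH, its message), let each token perform an independent lazy random walk of length $L$, and store at every hop a back-pointer so that an accepted trajectory can later be retraced. Because the walks are driven by independent random bits, the partners are mutually independent, which is exactly the form of near-independence the conditional $\lambda$-bias requirement demands; it then remains to control (a) the marginal law of each endpoint and (b) the per-edge congestion. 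The PULL direction is handled by the back-pointers: route the request forward along the walk, and send the response backward along the stored trajectory, whose per-edge, per-step load is the reverse multiset of the forward load and hence identical.

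For congestion, the clean fact is that if we start from one token per node and use a walk reversible with respect to $\pi(v)=\deg(v)/2m$, then the expected load crossing any edge in any step is $O(1)$. Indeed the expected number of tokens at $v$ at time $s$ is $\sum_u P^s(u,v)$, so the load leaving $v$ across one incident edge is $\tfrac{1}{\deg(v)}\sum_u P^s(u,v)=\sum_u P^s(v,u)/\deg(u)\le 1/d_{\min}=O(1)$, using reversibility and $\sum_u P^s(v,u)=1$. A Chernoff bound (tokens move independently given their positions) together with a union bound over the $O(n^2)$ edge–step pairs makes every per-edge, per-step load $O(\log n)$ w.h.p., so all tokens and their bookkeeping are delivered in $L\cdot O(\log n)=\tilde O(\mixingG)$ rounds.

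The main obstacle is step (a). A simple walk mixes to the \emph{degree-proportional} $\pi$, whereas $\gossip$ needs a partner that is \emph{uniform} to within a $(1\pm\lambda)$ factor, and the two can be far apart (on a star the center has stationary mass $\tfrac12$ but uniform mass $1/n$). I would correct the endpoint law by a local rejection/redistribution step: a token landing at $v$ is accepted with probability proportional to $1/\deg(v)$, which reweights $\pi$ to uniform up to the $1/\poly(n)$ relative error already paid for mixing, giving $\lambda=1/\poly(n)$. The difficulty is that naive rejection would force a low-degree bottleneck node to launch $\Theta(\bar d/\deg)$ trial walks across its few incident edges, blowing the budget. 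The crux of the proof is to balance this---for instance by letting each node launch a number of trials proportional to its degree (so the total is $O(m\cdot\polylog n)$ and every edge still carries only $O(\polylog n)$ tokens per step), pooling the accepted uniform samples, and moving the surplus held at high-degree nodes to the deficit at low-degree nodes by a diffusion whose equalization time is governed by the spectral gap, i.e.\ by $\mixingG$. Showing that this redistribution simultaneously yields uniform marginals \emph{for every individual node} and keeps the per-edge load at $O(\polylog n)$ is where essentially all the work lies; the remaining pieces---independence across nodes, the $\lambda$ guarantee, and the final w.h.p.\ claim---then follow from the independence of the walks and standard concentration.
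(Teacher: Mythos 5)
You have correctly isolated the crux --- the walk's stationary law is degree-proportional while $\mathsf{GOSSIP}(\lambda)$ needs a near-uniform partner --- and this is the same difficulty the paper identifies. But your proposal defers exactly this step: you concede that showing the rejection-plus-redistribution scheme ``simultaneously yields uniform marginals for every individual node and keeps the per-edge load at $O(\polylog n)$ is where essentially all the work lies,'' so what you have is a problem statement, not a proof. Worse, the specific route you sketch has concrete obstructions. With acceptance probability proportional to $1/\deg(v)$, the per-trial success rate is $\sum_v \frac{\deg(v)}{2m}\cdot\frac{O(1)}{\deg(v)} = O(n/m)$, so a node needs $\Theta((m/n)\log n)$ trials, and your fix of launching trials in proportion to degree then forces surplus samples at high-degree nodes to be pooled and reassigned to deficit nodes. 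This pooling breaks the very guarantee the model demands: the $\mathsf{GOSSIP}(\lambda)$ definition requires $\Pr\left(t(u)=v \,\middle|\, \bigwedge_{z\in Z} t(z)\right) \in [(1-\lambda)/n,(1+\lambda)/n]$ \emph{conditionally}, and once several nodes' partners are drawn from a shared pool, conditioning on $t(z)$ for $z \in Z$ reveals which pooled samples were consumed and can bias the law of $t(u)$; your appeal to ``independence of the walks'' no longer applies after reassignment. In addition, your back-pointers only let $u$ retrace walks that $u$ itself launched; after redistribution, $u$'s partner was found by a walk launched elsewhere, so you have no routing path between $u$ and $t(u)$, and the ``diffusion'' that moves samples is itself a routing problem you have not solved.

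The paper closes this gap with a different and self-contained device: compartments. Each node $v$ is split into $\deg(v)$ compartments, so a mixed walk lands in any \emph{fixed compartment} with probability $\approx \frac{\deg(v)}{2m}\cdot\frac{1}{\deg(v)} = \frac{1}{2m}$, uniformly over all $2m$ compartments regardless of degrees. Every node then pre-places exactly $k = \lfloor 1.5m/n\rfloor$ destination tokens into distinct compartments (via a splitting phase and an occupancy-style distributing phase, each $\tilde{O}(\tau_G)$ rounds), so every node owns the \emph{same} number of equally-likely landing slots. A single source walk from $u$ hits some destination token with constant probability (roughly $nk/(2m) \geq 0.9/8$), restarting on a miss, so $O(\log n)$ trials suffice w.h.p.\ --- no surplus, no deficit, no redistribution. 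Crucially, the $\left[\frac{k}{2m}-k\lambda', \frac{k}{2m}+k\lambda'\right]$ bound on hitting $v$'s tokens holds pointwise over \emph{any} fixed placement of destination tokens, which is precisely why conditioning on other nodes' partners is harmless and the conditional $(1\pm\lambda)/n$ bound survives. Your back-pointer idea does reappear in the paper's Step 3 --- messages travel forward along the source token's walk and then backward along the destination token's recorded walk, with $O(\log n)$ collisions per node w.h.p.\ --- and your $O(1)$ expected per-edge load calculation is sound for the one-walk-per-node regime, but without the compartment construction (or a full proof of your redistribution scheme) the argument does not go through.
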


We believe that this emulation result may be of independent interest. 
Jelasity et al.~\cite{Jelasity2007} studied how to implement the gossip-based peer sampling service empirically. Our result is an additional way to implement the service with theoretical guarantees.


\paragraph{Results in the $\gossip$ model} Motivated by our emulation result, we develop algorithms for the $\gossip$ model. In particular, we are interested in the following question: {\em Suppose the number of non-empty nodes are sublinear in $n$. Could we take advantage of the computational power of the empty nodes?}

Suppose that  the number of non-empty nodes is at most $O(n^{1/(k-1)})$ (or more generally, $F_0 \leq O(n^{1/(k-1)})$). We show that for any $p \geq 2$, $F_p$ can be approximated within a $1 \pm \eps$ factor in $O(\eps^{-2} n^{1-k/p} \log^2 n )$ rounds with high probability.

\begin{theorem}[{\em Main result 3}]\label{thm:fp-gossip}
If $F_0 = O(n^{1/(k-1)})$ for some integer $2 \leq k \leq p$, then there exists an algorithm that approximates $F_p$ up to a $1\pm \eps$ factor in $O(\eps^{-2} n^{1-k/p}\log^2 n)$ rounds in the $\gossip(1/n^c)$  model, for some sufficiently large constant $c$, w.h.p.
\end{theorem}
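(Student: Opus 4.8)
The plan is to build an unbiased estimator of $F_p$ that can be evaluated from a handful of uniform node samples, and then to exploit the $n$-fold parallelism of the $\gossip$ model to draw enough samples in few rounds. The estimator is a collision-based variant of the Alon--Matias--Szegedy approach, tuned to the parameter $k$. Write $m = F_1$ for the number of non-empty nodes, draw $k$ independent (near-)uniform non-empty nodes, and if they all carry the same value $i$ output $Z = m^k f_i^{p-k}$, otherwise output $0$. A direct computation gives $\Expect{Z} = \sum_i (f_i/m)^k \cdot m^k f_i^{p-k} = \sum_i f_i^p = F_p$, so $Z$ is unbiased. The point of using $k$ samples rather than $p$ is that collisions become far more likely, which is exactly what drives the variance down; the price is the factor $f_i^{p-k}$, which disappears when $k=p$ and otherwise must be supplied by a frequency estimate of the (necessarily heavy) collided value.

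First I would bound the variance. Since $\Expect{Z^2} = \sum_i (f_i/m)^k (m^k f_i^{p-k})^2 = m^k F_{2p-k}$, the relative variance is $\Var{Z}/\Expect{Z}^2 \le m^k F_{2p-k}/F_p^2$, a scale-invariant quantity. Writing $f_{\max}$ for the largest frequency and combining $F_{2p-k} \le f_{\max}^{\,p-k} F_p$ with the two power-mean lower bounds $F_p \ge f_{\max}^{\,p}$ and $F_p \ge F_1^{p}/F_0^{\,p-1}$, a short optimization over the ratio $f_{\max}/F_1$ yields $\Var{Z}/\Expect{Z}^2 \le F_0^{\,k(p-1)/p}$. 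Invoking the hypothesis $F_0 = O(n^{1/(k-1)})$ and the elementary inequality $k(p-1)/(p(k-1)) \le 2 - k/p$ (equivalently $(k-2)(p-k) \ge 0$, which holds since $2 \le k \le p$), this becomes $\Var{Z}/\Expect{Z}^2 = O(n^{2-k/p})$. Hence averaging $O(\eps^{-2} n^{2-k/p})$ independent copies of $Z$ and boosting by a median-of-means over $O(\log n)$ groups gives a $1\pm\eps$ estimate with high probability, using $N = O(\eps^{-2} n^{2-k/p}\log n)$ samples in total.

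Next I would implement the sampling in the $\gossip$ model. In one round every node can PULL a value from a near-uniform partner, so in $O(k)=O(1)$ rounds each of the $n$ nodes can coordinate one $k$-tuple, producing $\Theta(n)$ sampled tuples per $O(1)$ rounds. Collecting the $N$ required tuples therefore takes $O(N/n) = O(\eps^{-2} n^{1-k/p}\log n)$ rounds, which sits inside the claimed $O(\eps^{-2} n^{1-k/p}\log^2 n)$ once the logarithmic overhead of the counting and near-uniform-sampling subroutines is charged. The quantity $m=F_1$, and for each collided value the factor $f_i^{p-k}$, are obtained by aggregation carried out in parallel by the abundant empty nodes; because collided values are heavy and there are at most $F_0$ distinct ones, these auxiliary estimates are cheap against the sampling budget. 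Finally, the $\gossip(1/n^c)$ bias perturbs each coordinate probability by a factor $1 \pm 1/n^c$, so a $k$-tuple's probability is distorted by at most $(1\pm 1/n^c)^k = 1 \pm O(k/n^c)$, and choosing $c$ large makes the induced bias in $\Expect{Z}$ negligible against the $\pm\eps$ target.

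The main obstacle I anticipate is the faithful and efficient realization of the samples, not the variance calculation, which is elementary. Two points need care. First, the $k$ pulled nodes must be nearly independent and uniform over the \emph{non-empty} nodes: sampling uniformly over all $n$ nodes inflates the relative variance by $(n/m)^k$, so when $m \ll n$ one must either reject empty samples (absorbing the loss through parallelism) or, in the regime where $m$ itself is small, abandon sampling and simply gather all non-empty values directly within the same round budget; I would verify this case split closes the gap for every $2 \le k \le p$. Second, supplying $f_i^{p-k}$ when $p>k$ requires frequency estimates accurate to $1 \pm O(\eps/p)$ for precisely the heavy values appearing in collisions, and showing that these can be extracted from the ongoing sample stream (or by a dedicated parallel count) without exceeding the round budget is the step that most heavily uses the assumption that there are only $O(n^{1/(k-1)})$ distinct values.
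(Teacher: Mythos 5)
Your estimator and its variance analysis are sound, and your route is recognizably the same collision-based one as the paper's; the verification of $\Var{Z}/\Expect{Z}^2 \leq F_0^{k(p-1)/p} = O(n^{2-k/p})$ checks out. But the two obstacles you flag at the end are precisely where the paper's proof does its real work, and the fixes you sketch for the first one fail. The missing idea is the paper's preprocessing lemma (Lemma \ref{thm:gossip-preprocessing}): when only $z < n/3$ nodes are non-empty, each value is split into $\roundup{(n/3)/z}$ duplicate tokens which are spread over the empty nodes by a splitting-and-distributing gossip procedure in $O(\log^2 n)$ rounds, so that afterwards $F_1 = \Omega(n)$ while all occurrence ratios are preserved (one divides the final estimate by $\roundup{(n/3)/z}^p$). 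Without this, your rejection option collapses: take $k=p$ with all frequencies equal to $1$, so $m = F_0 = \Theta(n^{1/(k-1)})$, the extreme allowed by the hypothesis. A $k$-tuple of uniform nodes is entirely non-empty with probability $(m/n)^k$, so the valid-tuple throughput is $n(m/n)^k = n^{1-k+k/(k-1)}$, which for $k=3$ is $n^{-1/2}$ --- essentially no valid samples per round --- while you need $\Theta(\eps^{-2} F_0^{k-1}\log n) = \Theta(\eps^{-2} n \log n)$ tuples against a target budget of $O(\eps^{-2}\polylog(n))$ rounds. Your fallback of gathering all non-empty values directly is also over budget: it forces $\Omega(m) = \Omega(n^{1/(k-1)})$ words through individual nodes. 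So the case split you defer does not close, and the duplication step is essential rather than an implementation detail.

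Your second flagged point is likewise a genuine gap in the parallel scheme: a batch of $\Theta(n)$ tuples can contain many distinct collided values, and running per-value frequency counts in parallel is not free in the $\gossip$ model, since each node sends only $O(\log n)$ bits per round. The paper sidesteps both problems simultaneously by Rao--Blackwellizing your $Z$: it estimates $F_k$ once and tightly, via $\hat{F}_k = \frac{n^{k-1}}{T}\sum_{j,v} I_{j,v}$ with $T = O(\eps^{-2}\log n)$ phases in which every node tests a $k$-wise collision (giving $\hat{F}_k = (1\pm\eps)F_k$ w.h.p.\ by Chernoff, using $F_k = \Omega(n^{k-1})$ from Lemma \ref{lem:F0-Fp}), and then forms $\hat{F}_p = \hat{F}_k \cdot f_i^{p-k}$ where $i$ is a single $\ell_k$-sample and $f_i$ is computed exactly by one gossip sum in $O(\log n)$ rounds. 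Because the collision-indicator noise --- the dominant term in your variance --- is averaged out inside $\hat{F}_k$, this estimator has relative variance $O(n^{1-k/p})$, a factor $n$ below your $O(n^{2-k/p})$, so only $O(\eps^{-2} n^{1-k/p}\log n)$ samples processed \emph{serially} at $O(\log n)$ rounds each are needed, and no parallel tuple throughput or batched frequency counting arises at all. If you add the duplication lemma and replace the raw $Z$ by this split estimator (or rigorously bound the batched counting load), your argument becomes a correct variant of the paper's proof.
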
 

The $\gossip(1/n^c)$ model will incur a $\pm 1/\poly(n)$ additive error which we consider insignificant. Since $F_0\leq n$,  we have an algorithm that approximates $F_2$ in $\tilde{O}(\epsilon^{-2})$ rounds by setting $k=2$.  When $k > 2$, the empty nodes serve as the extra computation power to solve the problem. In such scenarios, we are able to obtain running time that is not known to be achievable by adapting the streaming counterpart. For example, when $k=3$, $F_0 = O(n^{1/2})$, we may approximate $F_3$  within a constant factor in $\polylog(n)$ rounds. Direct adaption of known streaming algorithms \cite{AndoniKO11,MonemizadehW10,AlonMS99} requires super-logarithmic rounds, even in the case where $F_0 = O(n^{1/2})$.

Combining Theorem \ref{thm:fp-gossip} and Theorem \ref{thm:gossipemu} with $k=p$, we  have the following corollary.
\begin{corollary}\label{cor:main4}
If $F_0 = O(n^{1/(p-1)})$, then there exists an algorithm in the $\congest$ model that approximates $F_p$ up to a $1\pm \eps$ factor in $\tilde{O}(\eps^{-2} \cdot \mixingG)$ rounds w.h.p.
\end{corollary}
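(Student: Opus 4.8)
The plan is to combine the two main results directly, since Corollary \ref{cor:main4} is stated as an immediate consequence of Theorem \ref{thm:fp-gossip} and Theorem \ref{thm:gossipemu} with the substitution $k = p$. First I would invoke Theorem \ref{thm:fp-gossip} with the parameter choice $k = p$. The hypothesis of that theorem requires $F_0 = O(n^{1/(k-1)})$ for some integer $2 \leq k \leq p$; setting $k = p$ turns this into exactly the assumption $F_0 = O(n^{1/(p-1)})$ given in the corollary, and since $2 \leq p$ is implicit (we are in the regime $p \geq 2$), the hypothesis is satisfied. The theorem then yields an algorithm in the $\gossip(1/n^c)$ model that produces a $1 \pm \eps$ approximation to $F_p$ in $O(\eps^{-2} n^{1 - k/p} \log^2 n)$ rounds w.h.p.

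Next I would simplify the round bound under the substitution $k = p$. The exponent $1 - k/p$ becomes $1 - p/p = 0$, so $n^{1-k/p} = n^0 = 1$, and the $\gossip$-model round complexity collapses to $O(\eps^{-2} \log^2 n) = \tilde{O}(\eps^{-2})$. This is the key arithmetic observation: choosing $k$ as large as the hypothesis permits (namely $k = p$) maximally exploits the empty nodes and drives the $n$-dependence of the $\gossip$ algorithm down to polylogarithmic.

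Finally I would translate this $\gossip$ algorithm into the $\congest$ model via the emulation of Theorem \ref{thm:gossipemu}. Since the algorithm runs in the $\gossip(1/n^c)$ model for a sufficiently large constant $c$, I can take $\lambda = 1/n^c = 1/\poly(n)$, which is exactly the regime Theorem \ref{thm:gossipemu} handles: each round of $\gossip(1/\poly(n))$ is emulated in $\tilde{O}(\mixingG)$ rounds of the $\congest$ model on the underlying connected graph $G$. Multiplying the $\tilde{O}(\eps^{-2})$ round count from the previous step by the per-round emulation overhead $\tilde{O}(\mixingG)$ gives a total of $\tilde{O}(\eps^{-2} \cdot \mixingG)$ rounds in the $\congest$ model, as claimed. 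The correctness and high-probability guarantees are preserved because the emulation reproduces each gossip round almost perfectly (the $\gossip(1/n^c)$ slack only contributes the insignificant $\pm 1/\poly(n)$ additive error already noted), and the $\polylog(n)$ factors from both the algorithm and the emulation are absorbed into the $\tilde{O}$ notation.

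I do not expect a genuine obstacle here, since the corollary is a composition of two black-box results; the only point requiring minor care is confirming that the emulation's success probability and additive error remain acceptable after being invoked once per simulated round over the $\tilde{O}(\eps^{-2})$ rounds, so that a union bound over all emulated rounds still yields a high-probability guarantee with only a negligible cumulative additive error.
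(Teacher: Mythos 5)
Your proposal is correct and follows exactly the paper's route: the paper proves Corollary \ref{cor:main4} precisely by combining Theorem \ref{thm:fp-gossip} with $k=p$ (so the round bound collapses to $O(\eps^{-2}\log^2 n)$ since $n^{1-k/p}=1$) with the emulation of Theorem \ref{thm:gossipemu} at $\lambda = 1/\poly(n)$, incurring an $\tilde{O}(\mixingG)$ per-round overhead. Your additional care about the union bound over emulated rounds and the $\pm 1/\poly(n)$ additive error is consistent with the paper's remarks and does not change the argument.
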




\begin{figure}\label{tbl:result}
\centering
\begin{tabular}{|c|c|c| c|}
\hline
& & &\\
[-1em]
& Number of rounds & Assumption & Approximation \\
[-1em]
 &   & & \\
 \hline 
 & &  &\\
 [-1em]
\multirow{2}{*}{$\congest$} &  $\mixingG \cdot 2^{O(\sqrt{\log n})}  $ (*)&  & Exact  \\\cline{2-4}
& & &\\
 [-1em]
 &    $O(\eps^{-2} \mixingG \cdot \polylog n)$ & $F_0 \leq O(n^{1/(p-1)})$ & $1 \pm \epsilon$ \\
\hline
 & &  & \\
 [-1em]
$\gossip$  & $O(\eps^{-2} \cdot n^{1-k/p} \cdot \polylog n)$ & $F_0 \leq O(n^{1/(k-1)})$ & $1 \pm \epsilon$\\
 \hline
\end{tabular}
\caption{Results summary for computing frequency moments $F_p$. (*)  can also be used to compute $\sum_i g(f_i)$ for all fixed and computable functions $g$.}
\end{figure}



\subsection{Related work and preliminaries} \label{sec:prelim} 
\paragraph{Related work} In the distributed setting, Kuhn et al. \cite{KuhnLS08} studied the problem of finding the mode, i.e., the most frequent element, in the $\congest$ model. Let $D$ is the diameter of the graph, and $f^*$ is the largest number of occurrences among the values.  They gave an algorithm that uses $O(D+ F_2/f^* \cdot \log k)$ rounds. They also briefly explained how to implement streaming algorithms for approximating $F_0$ and $F_2$ in the $\congest$ models. Also related to data summarization, Kuhn et al. \cite{KLW07} designed selection algorithms in the $\congest$ model. 

In the data stream model, each stream token $(i,x)$ corresponds to the update $f_i \leftarrow f_i + x$. The problem of approximating the number of distinct elements $F_0$ and frequency moments $F_p$ have been extensively studied. An incomplete list includes \cite{Bar-YossefJKST02,GibbonsT01,KaneNW10,AlonMS99,IndykW05,Indyk06,AndoniKO11,Woodruff04,Ganguly15}. Roughly speaking, the space complexity for approximating $F_p$ in the data stream model is $\tilde{O}(\eps^{-2})$  for $0 \leq p \leq 2$ and $\tilde{O}(\eps^{-2}n^{1-2/p})$ for $p \geq 2$. Furthermore, it is known that approximating $F_\infty$ (or identifying the mode) is not possible in sublinear space. In the data stream model, researchers have also studied the problem of approximating the entropy \cite{HarveyNO08,ChakrabartiBM06,ChakrabartiCM10}. 

We will briefly discuss the similarities between the data stream model and the $\congest$ model. Roughly speaking, since streaming algorithms use little memory, they can be adapted to the $\congest$ model by passing the memory state of the corresponding algorithm along the breadth-first-search tree.  Similarly, lower bounds from streaming algorithms literature can also be translated into lower bounds in the $\congest$ model. Data aggregation problems have also been studied in  directed networks \cite{KuhnO11}.

There is also a rich literature in the $\gossip$ model started by the work of \cite{DGHILSSSD87}.  Some examples include spreading message \cite{FG85,Pittel87,KSSV00}, computing the sum and average \cite{kempe2003gossip, CP12,KDNRS06}, renaming \cite{Giakkoupis13}, and quantile computation \cite{HaeuplerMS18}.


\paragraph{Preliminaries} We introduce basic notations and algorithmic building blocks in the \congest model. 

To ease our presentation, we assume $N = O(\poly(n))$. In our algorithms, we often want to learn about the sum of all the values (or hash values, indicator variables) held by the nodes; this can be done in $O(D)$ rounds. Another algorithmic primitive, based on downcasts and upcasts, is to broadcast the $k$ smallest values in $O(D + k)$ rounds. 




We define the mixing time similarly to \cite{GhaffariKS17}.  A lazy random walk is a random walk in which at each step, we stay at the same node with probability 0.5 and move to a random neighbor with probability 0.5. Lazy random walk ensures  the existence of a unique stationary distribution (i.e.,  the walk is aperiodic). From now  on, we simply refer to a lazy random walk as a random walk. 

Let $P^t_u=(P^t_u(v_1), \ldots, P^t_u(v_n)) \in [0,1]^{n}$ denotes the probability distribution on the nodes after $t$ steps of a lazy random walk that starts at $u$. A crucial property of a random walk is that it will converge to the stationary distribution $(\deg(v_1) / 2m , \ldots, \deg(v_n) / 2m)$. Define the mixing time $\tau_G$ to be the minimum $t$ such that for any starting node $u$ and any node $v_i$,
$$\left| P^t_u(v_i) - \frac{\deg(v_i)}{2m} \right| \leq \frac{\deg(v_i)}{2mn}~.$$

Using an $O(D)$-round pre-proscessing, we can  assume that each node has a unique  $\ID$ in $[n]$.  Suppose we want the nodes in a graph to have unique IDs in $[n]$. We can elect a leader and build a breadth-first-search (BFS) tree that is rooted at the leader  in $O(D)$ rounds \cite{Peleg:2000}. Each node $u$ can learn about the number of nodes in $T_v$ where $v$ is a child of $u$ and $T_v$ is the subtree that is rooted at $v$. This is done by aggregating the size from the leaves upward. It is then straightforward to assign the $\ID$s to the nodes based on the depth-first-search (DFS) ordering. Specifically, the root notifies each of its children $v$ the range of the $\ID$s in $T_v$, based on the DFS ordering, and then recurse on $T_v$.  From now on, we can refer to the nodes by their $\ID$s, i.e., $\ID(v) = v$. 

We will also make use of hash functions. An $O(1)$-wise independent hash function $h: [a] \rightarrow [b] $ where $a$ and $b$ are at most $\poly(n)$  can be stored in $O(\log n)$ bits. Hence, if we need to use a hash function, a leader can broadcast such  hash function (using a BFS tree) in $O(D + \log n)$  rounds in the \congest model and $O(\log n)$ rounds in the \gossip model.  




\section{Algorithms in the CONGEST Model}\label{sec:exactcongest}

\subsection{Approximation algorithms}
\paragraph{Upper bounds} We show that we can adapt  the streaming algorithms given by Bar-Yossef et al. \cite{Bar-YossefJKST02} (for approximating $F_0$) and by Alon et al. \cite{AlonMS99} (for approximating $F_p$, where $p \geq2$) to the $\congest$ model (see Appendix  \ref{appendix:adaptation}).  This is not of particular novelty though we need some careful pipelining arguments to optimize the number of rounds. Kuhn et al. \cite{KuhnLS08} also briefly outlined similar results. However, the exact round-complexity for a good approximation w.h.p. is not very clear from their paper. 

%

\begin{theorem} \label{thm:approximate-congest-f02}
There exists an $O(D + \eps^{-2} \log n)$-round algorithm in the $\congest$ model that computes a $1 \pm \eps$ approximation of $F_0$ and $F_2$ w.h.p. Furthermore, for $p > 2$, there exists an $O(D + \eps^{-2}\min(n, N)^{1-1/p} \log n)$-round algorithm  that computes a $1 \pm \eps$ approximation of $F_p$ w.h.p. 
\end{theorem}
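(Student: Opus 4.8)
The plan is to take the classical small-space streaming estimators of Bar-Yossef et al.~\cite{Bar-YossefJKST02} (for $F_0$) and Alon et al.~\cite{AlonMS99} (for $F_2$ and $F_p$) and realize each of them by a combination of pipelined convergecasts and broadcasts over a fixed BFS tree. First I would spend $O(D)$ rounds electing a leader, building a BFS tree rooted at it, and assigning each node a unique $\ID$ in $[n]$ via the DFS ordering, exactly as in the preliminaries. The single reusable fact I rely on is that $M$ independent aggregations (partial sums, or an $M$-fold downcast/upcast) over a depth-$D$ tree can be \emph{pipelined} into $O(D+M)$ rounds rather than $O(D\cdot M)$: stagger the $M$ quantities so that, on every tree edge and in every round, exactly one $O(\log n)$-bit value moves; after the pipeline is full, one quantity completes per round. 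This is precisely the ``careful pipelining'' that turns an $O(D)$-per-copy cost into a single additive $O(D)$ overhead.

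For $F_2$ and $F_0$ this is immediate. For $F_2$, the leader downcasts $M=O(\eps^{-2}\log n)$ four-wise-independent sign hashes (each storable in $O(\log n)$ bits) in $O(D+M)$ rounds; each node $u$ contributes $\sigma^{(c)}(\val{u})$ to the $c$-th sketch $Z_c=\sum_u \sigma^{(c)}(\val{u})$, and the $M$ sums are gathered to the leader by one pipelined convergecast in $O(D+M)$ rounds. A median-of-means over the $Z_c^2$ yields a $1\pm\eps$ estimate w.h.p., for $O(D+\eps^{-2}\log n)$ total. For $F_0$ I would use the $k$-minimum-values estimator: hash each value and keep, globally, the $t=O(\eps^{-2}\log n)$ smallest \emph{distinct} hash values; because equal values hash identically, deduplication is automatic, so this is exactly the ``$k$ smallest values'' primitive, computable in $O(D+t)=O(D+\eps^{-2}\log n)$ rounds, and its $t$-th minimum estimates $F_0$.

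The interesting case is $F_p$ for $p>2$, where I would reformulate the AMS estimator to make it aggregation-friendly. Recall AMS picks a uniformly random stream token, of value $i$, together with a suffix count $r\in\{1,\dots,f_i\}$, and outputs $m\paren{r^p-(r-1)^p}$, which is unbiased for $F_p$ with $\Var{X}/\Expect{X}^2 = O\paren{F_0^{1-1/p}}$; hence $S=O(\eps^{-2}\min(n,N)^{1-1/p}\log n)$ samples, combined via median-of-means, give $1\pm\eps$ w.h.p.\ (using $F_0\le \min(n,N)$). Since the $\ID$s already follow the DFS order, a token of value $i$ and its suffix count are captured by drawing a uniform non-empty node $J$ and counting nodes $u$ with $\val{u}=\val{J}$ and $\ID(u)\ge \ID(J)$. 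I would draw all $S$ samples at once (a batched rank-selection down the tree using non-empty-subtree sizes, $O(D+S)$ rounds, which simultaneously delivers the $S$ query pairs $\paren{\val{J_c},\ID(J_c)}$ to the leader), broadcast them to all nodes by one pipelined downcast, let each node emit its indicator into the $c$-th counter, and aggregate the $S$ suffix counts $r_c$ by one pipelined convergecast---each step $O(D+S)$. The leader then forms $m\paren{r_c^p-(r_c-1)^p}$ and takes the median-of-means.

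The main obstacle, and the only place demanding care, is exactly this batching: a naive implementation pays $O(D)$ rounds \emph{per} sample, which is fatal once $S$ is polynomial in $n$. The resolution is that every per-sample operation---the selection, the broadcast of the query pair, and the suffix count---is an independent aggregation, so all $S$ of them share a single $O(D)$ pipeline fill and then complete at a rate of one per round, yielding the claimed additive $O(D+S)$. Correctness then reduces to the standard unbiasedness and variance bounds of the AMS and $k$-minimum-values estimators, together with a Chernoff bound over the $O(\log n)$ median groups and a union bound over the $O(1)$ reported statistics.
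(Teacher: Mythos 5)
Your overall route---adapt the BJKST/AMS streaming estimators and collapse the $O(D)$-per-repetition cost into a single additive $O(D)$ by pipelining independent aggregations on a BFS tree---is exactly the paper's approach. Your treatments of $F_2$ (tug-of-war sketches: pipelined downcast of $O(\eps^{-2}\log n)$ seeds, pipelined convergecast of the counters, median-of-means) and of $F_p$ for $p>2$ (uniform sample of a non-empty node $J$, suffix count $r=\size{\{u : \val{u}=\val{J},\ \ID(u)\ge\ID(J)\}}$, estimator proportional to $r^p-(r-1)^p$ with relative variance $O(F_0^{1-1/p})$ and $F_0\le\min(n,N)$, all $S$ samples batched into $O(D+S)$ rounds) match the paper's appendix in substance; your batched rank-selection and median-of-means are, if anything, cleaner statements of steps the paper delegates to citations.

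The genuine gap is in your $F_0$ argument. You run a \emph{single} hash function and keep the $t=O(\eps^{-2}\log n)$ smallest distinct hash values, claiming the $t$-th minimum gives a $1\pm\eps$ estimate w.h.p. But a hash function storable and broadcastable in $O(\log n)$ bits (as the paper's preliminaries assume) has only $O(1)$-wise independence, so the only concentration available for the number of values hashing below a threshold is Chebyshev: the failure probability is $\Theta(1/(\eps^2 t))=\Theta(1/\log n)$ at your choice of $t$, far from $1-1/\poly(n)$, and inflating $t$ cannot repair this since the bound decays only polynomially in $t$ (you would need $t=\poly(n)$). The paper avoids this by keeping $t=O(\eps^{-2})$ per hash, running $\Theta(\log n)$ independent pairwise hash functions, and taking the median of the resulting constant-confidence estimates; its one nontrivial ingredient is precisely the pipelined upcast of the $t$ smallest elements of each of $k$ groups in $O(\depth(T)+kt)$ rounds (Lemma \ref{lem:upcasts}), which your generic pipelining fact essentially reproves. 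So the fix costs you nothing: replace the single-hash KMV by the median-of-$O(\log n)$-hashes scheme and the round count $O(D+\eps^{-2}\log n)$ is unchanged. (Alternatively, you could keep one hash but make it $\Theta(\log n)$-wise independent---the seed is $O(\log^2 n)$ bits, still broadcastable in $O(D+\log n)$ rounds---and invoke a tail bound for sums of $k$-wise independent indicators; but as written, with no independence assumption stated, your w.h.p.\ claim for $F_0$ does not follow.)
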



\paragraph{Lower bounds}  We show that the dependence on $\eps$ is tight via a conditional lower bound. Moreover, computing $F_p$ exactly requires $\tilde{\Omega}(n)$ rounds. The lower bounds are obtained by adapting the existing streaming lower bounds to the $\congest$ model. Due to space constraint, we refer to  Appendix  \ref{appendix:adaptation} for the discussion.

\begin{theorem}\label{thm:lower-bounds} 
We have the following lower bounds in the $\congest$ model.
 \begin{itemize}
\item If the conjecture in \cite{BrodyC09} holds, then approximating $F_p$ (for fixed $p \neq 1$) up to a $1 \pm \eps$ factor  requires $\Omega(D+\eps^{-2}/\log n)$ rounds.
\item A $(1 \pm 0.1)$-approximation of $F_p$, for $p > 2$,  requires $ \Omega\left(D+\left(N^{1-\frac{2}{p}}+n^{\frac{1-2/p}{1+1/p}}\right)/\log n\right)$ rounds.
\item Computing $F_p$ exactly requires $\Omega(D+n/\log n)$ rounds.
\end{itemize}
\end{theorem}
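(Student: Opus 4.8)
The plan is to derive all three lower bounds by reduction from two-party communication complexity, exploiting the fact that the hard graph instances contain a small balanced cut through which all relevant information must be funneled. Concretely, I would take a graph $G$ built from two clusters joined by a cut of width $c$ (a parameter to be chosen), assign the inputs on one cluster to Alice and those on the other to Bob, and observe that any $R$-round \congest algorithm induces a communication protocol in which at most $c \cdot O(\log n)$ bits cross the cut per round. Hence, if the underlying function has randomized communication complexity $\mathrm{CC}$, then $R \cdot c \cdot O(\log n) \ge \mathrm{CC}$, i.e. $R = \Omega\!\left(\mathrm{CC}/(c\log n)\right)$. The additive $D$ term in each bound follows from a separate elementary argument: on a path-like graph of diameter $D$ any node's output can depend on a far input only after $\Omega(D)$ rounds, so each stated bound is really the maximum of the cut-based term and $D$.

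The first and third bullets both use an $O(1)$-width cut, so $c = O(1)$ and $R = \Omega(\mathrm{CC}/\log n)$. For the first, I would reduce from the Gap-Hamming-Distance problem, whose communication complexity is $\Omega(\eps^{-2})$ under the conjecture of \cite{BrodyC09}: Alice and Bob encode their $\Theta(\eps^{-2})$-bit vectors on the two sides of the cut, and a $1\pm\eps$ approximation of $F_p$ (any fixed $p\neq 1$) lets them decide the gap instance, giving $R=\Omega(\eps^{-2}/\log n)$. For the third, exactness, I would reduce from a full information-transfer problem across the $O(1)$-cut: I would encode an arbitrary $\Theta(n)$-bit string on Alice's side so that distinct strings force distinct exact values of $F_p$, so Bob must effectively learn $\Omega(n)$ bits; since only $O(\log n)$ bits cross the cut per round this forces $R=\Omega(n/\log n)$. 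In both cases combining with the diameter argument yields the claimed $\Omega(D+\cdot)$ form.

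The second bullet is where the real work lies. Here a two-party disjointness reduction is too weak, since a single shared element contributes only an $O(1)$-factor to $F_p$; one needs a high-frequency element, which requires the multiparty structure underlying the streaming $F_p$ lower bounds of \cite{AlonMS99,Bar-YossefJKST02}. I would reduce from $t$-party number-in-hand set-disjointness on universe $[N]$: in the uniquely-intersecting case some element has frequency $t$ and contributes $t^p$ to $F_p$, whereas in the disjoint case $F_p = O(N)$, so choosing $t=\Theta(N^{1/p})$ makes the two cases differ by a constant factor and a $(1\pm 0.1)$-approximation decides disjointness. The standard $t^2$ conversion from the $\Omega(N/t)$ communication bound yields the $N^{1-2/p}=N^{(p-2)/p}$ streaming term. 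The delicate part is the \congest embedding: I would lay the $t$ players across a balanced cut of width $c$, but each player's set occupies nodes, so $t$ and $c$ are constrained by the $n$-node budget, and the per-element packing must preserve both cut sparsity and the disjointness promise. Optimizing $R=\tilde\Omega\!\left(\mathrm{CC}/(c\log n)\right)$ over the feasible pairs $(t,c)$ produces the second term, $n^{(1-2/p)/(1+1/p)}=n^{(p-2)/(p+1)}$, which dominates when the node budget (rather than the universe size $N$) is the binding constraint. I expect this optimization, and in particular pinning down the exponent $(p-2)/(p+1)$ by correctly accounting for how many universe elements can be packed per node while keeping the cut sparse, to be the main obstacle.
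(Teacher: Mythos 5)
Your cut-simulation framework, the $\Omega(D)$ chain argument, and your treatments of the first two bullets all track the paper's proofs, but the third bullet as you describe it fails. You propose encoding an arbitrary $\Theta(n)$-bit string on Alice's side ``so that distinct strings force distinct exact values of $F_p$,'' concluding that Bob must learn $\Omega(n)$ bits. This is impossible: for fixed $p$, the exact value of $F_p$ is an integer in $[0, n^p]$ (since $\sum_i f_i \leq n$), so the answer carries only $O(\log n)$ bits of information, and at most $\poly(n)$ of your $2^{\Theta(n)}$ strings can map to distinct $F_p$ values. The transcript crossing the $O(1)$-width cut need only determine this $O(\log n)$-bit output, so the pure information-transfer argument yields a trivial bound. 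A repair in which Bob's input also varies (for every pair of Alice inputs there is a Bob input on which $F_p$ differs) is a fooling-set argument and gives only a \emph{deterministic} lower bound, which would not rule out the randomized algorithms of \cref{thm:approximate-congest-f02}. The paper instead reduces from two-party set disjointness, which has a one-bit output but randomized communication complexity $\Omega(n)$: in the disjoint case $F_p = |A| + |B|$, while in the uniquely-intersecting case $F_p = |A| + |B| - 2 + 2^p$, so an exact algorithm (plus two rounds to exchange $|A|$ and $|B|$) decides disjointness across the single cut edge, forcing $\Omega(n/\log n)$ rounds against randomized algorithms.

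On the second bullet, the ``main obstacle'' you flag --- optimizing $R = \tilde{\Omega}(\mathrm{CC}/(c \log n))$ over feasible pairs $(t,c)$ and worrying about per-element packing --- does not actually arise. The paper fixes a single construction: $t$ parts, each a clique of $N$ nodes encoding one player's set, with one edge from each part to a single blackboard node $b$, so the cut width is exactly $t$ and $O(t \log n)$ bits reach the blackboard per round. Taking $t > 2^{1/p} N^{1/p}$ separates the cases ($F_p \leq N$ versus $F_p \geq t^p > 2N$), and the blackboard-model bound $\Omega(N/t)$ of \cite{ChakrabartiKS03} gives $R = \Omega(N/(t^2 \log n)) = \Omega(N^{1-2/p}/\log n)$, exactly your ``$t^2$ conversion.'' The second exponent $n^{(1-2/p)/(1+1/p)}$ is not the outcome of a separate optimization over the node budget; it is the same bound rewritten using $n = t \cdot N = \Theta(N^{1+1/p})$. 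So your bullets one and two are essentially the paper's route (with bullet two left unfinished but correctly aimed), while bullet three needs to be replaced wholesale by the disjointness reduction.
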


Hence, we cannot expect a sublinear algorithms (in terms of $N,n$) when $\eps \ll 1/\sqrt{n}$ or when we want to obtain the exact answer. The lower bounds arise in graphs with a small balanced cut which causes an information bottleneck. This observation motivates us to design an exact algorithm when the graph is well-connected.

\subsection{An exact algorithm in near mixing-time} \label{sec:exact-algorithm}

In this subsection, we show that it is possible to beat the  lower bounds and achieve an exact algorithm in sublinear time if the graph has fast mixing time.  For example, expander graphs are sparse and have $O(\polylog n)$ mixing time. 
 


Suppose each node has a set of messages (of size $\polylog(n)$) each of which has a destination that is another node. In parts of our algorithms, we want to route  messages in a small number of rounds. We rely on the following routing algorithm in the $\congest$ model that uses $\mixingG \cdot 2^{O(\sqrt{\log n })}$ rounds.  We note that $2^{O(\sqrt{\log n })}$ is more than $\polylog n$ but smaller than any $n^{\eps}$ for $\eps > 0$.  Also  note that $D = O(\mixingG)$. Let $\deg(v)$ be the degree of $v$ in $G$.

\begin{theorem}[\cite{GhaffariL18}, \cite{GhaffariKS17}] \label{thm:routing}
If each node of $G$ is the source and the destination of at most $d_G(v) \cdot 2^{O(\sqrt{\log n })}$ messages, then there is a randomized algorithm in the \congest model that delivers all the  messages in $\mixingG \cdot 2^{O(\sqrt{\log n })}$ rounds  w.h.p.  
\end{theorem}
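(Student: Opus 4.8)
The plan is to treat the near-mixing-time routing primitive for \emph{degree-respecting} instances as a black box supplied by \cite{GhaffariKS17,GhaffariL18} and to reduce the stated load regime to it. Recall that in those works one is given a routing task in which every node $v$ is the source and the destination of at most $\deg(v)$ messages, and all messages are delivered in $\mixingG \cdot 2^{O(\sqrt{\log n})}$ rounds w.h.p. The only gap between that statement and \Cref{thm:routing} is the extra multiplicative factor $k := 2^{O(\sqrt{\log n})}$ in the per-node message budget: here each node may source and sink up to $\deg(v)\cdot k$ messages. First I would therefore decompose the given instance into $k$ sub-instances, each of which is degree-respecting, route them one after another, and observe that the total cost is $k\cdot \mixingG \cdot 2^{O(\sqrt{\log n})}=\mixingG\cdot 2^{O(\sqrt{\log n})}$, since the product of two $2^{O(\sqrt{\log n})}$ factors is again $2^{O(\sqrt{\log n})}$. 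A union bound over the $k=n^{o(1)}$ sub-instances preserves the high-probability guarantee.

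The heart of the reduction is the decomposition itself, which I would phrase as a bipartite edge-colouring. Form the bipartite multigraph $B$ whose left vertices are the nodes acting as sources, whose right vertices are the nodes acting as destinations, and whose edges are the messages; then the degree of $v$ on the left is at most $\deg(v)\cdot k$ and likewise on the right. Split each left vertex $v$ into $\deg(v)$ ports, distributing its incident edges so that each port carries at most $k$ of them (possible because $v$ has at most $\deg(v)\cdot k$ edges), and split each right vertex the same way. The resulting port-multigraph has maximum degree at most $k$, so by K\"onig's edge-colouring theorem for bipartite multigraphs it decomposes into $k$ matchings. Colour class $c$, read back on the original nodes, places at most one message per port and hence at most $\deg(v)$ messages sourced at $v$ and at most $\deg(v)$ messages destined for $v$: exactly a degree-respecting instance, as required.

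The main obstacle is realising this colouring with only \congest communication. The source side is free: each node can locally round-robin its own outgoing messages across the $k$ colours, which already guarantees at most $\deg(v)$ per colour at every source. The difficulty is the \emph{destination} side, because a node does not a priori know which messages are headed to it, so the two-sided balancing cannot be computed by purely local decisions. I would resolve this by a preliminary bookkeeping pass that uses the same routing primitive on short $O(\log n)$-bit tokens (source/destination identifiers and colour requests), so that each destination can enumerate and re-colour its incoming messages; alternatively, one can settle for a cruder balancing that uses $O(k)$ rather than exactly $k$ colours but needs no global coordination. Either way the colour count stays $2^{O(\sqrt{\log n})}$ and the final round bound is unaffected. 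Should the cited theorems already be stated with the $\deg(v)\cdot 2^{O(\sqrt{\log n})}$ budget absorbed into their slack, this entire reduction collapses to a direct citation; the decomposition above is the safeguard that makes the generalisation rigorous regardless of the exact form in which the primitive is quoted.
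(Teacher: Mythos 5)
The paper does not prove this statement at all: \Cref{thm:routing} is quoted as a black box from \cite{GhaffariKS17,GhaffariL18}, where the $\deg(v)\cdot 2^{O(\sqrt{\log n})}$ per-node budget is already absorbed into the stated guarantee (the load factor enters their round bounds linearly and is swallowed by the $2^{O(\sqrt{\log n})}$ slack, exactly as your closing sentence anticipates). So your fallback position --- ``collapse to a direct citation'' --- is in fact the paper's entire proof, and to that extent you are fine.

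The problem is the safeguard reduction itself, which has a genuine gap at precisely the point you flag. A two-sided degree-respecting decomposition of the demand multigraph is equivalent to the bipartite edge colouring you describe, and neither of your two repairs produces one in the \congest model. The ``preliminary bookkeeping pass'' is circular: its tokens (one per message, each $O(\log n)$ bits) form a routing instance with exactly the same source and destination load profile as the original problem --- all messages in this model are $O(\log n)$ bits to begin with, so you would be invoking \Cref{thm:routing} to prove \Cref{thm:routing}. And the ``cruder balancing with $O(k)$ colours and no global coordination'' cannot control the destination side: a source-local or random colour assignment only guarantees per-colour destination load $O(\deg(v)+\log n)$ w.h.p., and since all $\deg(v)\cdot k$ messages aimed at a constant-degree node may come from distinct sources, some colour class exceeds the $\deg(v)$ budget by a $\Theta(\log n)$ factor; iterating the split does not remove this additive concentration floor, because spreading a low-degree node's incoming messages over sub-classes requires coordinating its (unknown) senders --- the same chicken-and-egg problem. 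The only exits are to tolerate the $O(\log n)$ overload via a load-scaled version of the primitive, or to open up the internals of \cite{GhaffariKS17,GhaffariL18}; but either move renders the decomposition unnecessary, and one should simply cite the theorem in its stated form, as the paper does.
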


We also rely on the idea of sorting networks. Recall that we refer to the nodes by their unique $\ID$s in $[n]$. In a sorting network, in each step $r$, the sorting network will pick a set of disjoint pairs of nodes. We use $\val{x,r}$ to denote the  value that node $x$ holds in the beginning of step $r$. For each pair $x$ and $y$ (where $x < y$) that is picked, $x$ will keep the smaller value $\min(\val{x,r},\val{y,r})$ and $y$ will keep the larger value $\max(\val{x,r},\val{y,r})$. We treat $\textup{NULL}$ as $-\infty$. The sorting network can be constructed, solely based on $n$, so that after $t= O(\log n)$ steps, the values are sorted \cite{AjtaiKS83}. That is if $x < y$, then $\val{x,t} \leq \val{y,t}$. 

In the $\congest$ model, each node can generate the sorting network (note that the construction of the sorting network is independent of the topology of $G$ and the values held by the nodes). Furthermore, each step can be simulated by invoking Theorem \ref{thm:routing}. Thus, we have the following.
\begin{lemma}
In the $\congest$ model, we can sort the nodes' values in  $\mixingG \cdot 2^{O(\sqrt{\log n })} $ rounds w.h.p.
\end{lemma}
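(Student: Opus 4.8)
The plan is to combine the two building blocks already established in the excerpt: the sorting network construction of Ajtai, Komlós, and Szemerédi~\cite{AjtaiKS83}, which has depth $O(\log n)$ and is defined purely as a function of $n$, and the routing primitive of \Cref{thm:routing}. First I would have every node locally generate the same sorting network. This is possible because each node already knows $n$ (from the $O(D)$ preprocessing that assigns unique $\ID$s) and the network's structure depends only on $n$, not on the graph topology or the held values. Thus all nodes agree, without any communication, on the sequence of comparator steps: at each step $r$, the network specifies a set of disjoint pairs $(x,y)$ with $x<y$ that must perform a compare-exchange on $\val{x,r}$ and $\val{y,r}$.

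Next I would simulate the $t = O(\log n)$ steps one at a time, each step via a single invocation of \Cref{thm:routing}. In step $r$, each node $x$ paired with some $y$ needs to send its current value $\val{x,r}$ to $y$ (and vice versa), after which both can locally compute $\min$ and $\max$ and retain the appropriate one. The key point I must check is that this communication pattern satisfies the hypothesis of \Cref{thm:routing}, namely that each node is the source and destination of at most $d_G(v)\cdot 2^{O(\sqrt{\log n})}$ messages. Since the comparator pairs within a single step are disjoint, each node participates in at most one comparison per step, so it sends and receives $O(1)$ messages. As $d_G(v)\geq 1$ for every node in a connected graph, the bound is trivially met, and the routing completes in $\mixingG\cdot 2^{O(\sqrt{\log n})}$ rounds w.h.p.

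Multiplying the per-step cost by the $O(\log n)$ steps gives a total of $O(\log n)\cdot \mixingG \cdot 2^{O(\sqrt{\log n})} = \mixingG \cdot 2^{O(\sqrt{\log n})}$ rounds, where the logarithmic factor is absorbed into the $2^{O(\sqrt{\log n})}$ term since $\log n = 2^{O(\sqrt{\log n})}$. Finally, I would invoke the correctness guarantee of the sorting network: after all $t$ steps, $\val{x,t}\le \val{y,t}$ whenever $x<y$, with $\textup{NULL}$ treated as $-\infty$, so the values end up sorted according to the nodes' $\ID$ order. The high-probability qualifier propagates through: each of the $O(\log n)$ routing invocations succeeds w.h.p., and a union bound over the polylogarithmically many invocations preserves the $1 - 1/\poly(n)$ success guarantee.

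The only genuinely delicate point, and the one I expect to need the most care, is verifying the source/destination load condition of \Cref{thm:routing} and confirming that the $O(\log n)$ sequential routing invocations can be pipelined or at least summed without blowing past the claimed bound. The disjointness of comparators makes the per-step load analysis essentially immediate, so the argument is really a matter of correctly composing the two cited results rather than introducing any new technical machinery.
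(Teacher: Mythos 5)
Your proposal is correct and follows essentially the same route as the paper: every node locally constructs the same AKS sorting network (which depends only on $n$), and each of the $O(\log n)$ comparator steps is simulated by one invocation of \Cref{thm:routing}, with the $O(\log n)$ factor absorbed into $2^{O(\sqrt{\log n})}$. Your explicit verification of the source/destination load condition (each node participates in at most one disjoint comparator pair per step, so $O(1)$ messages suffice) is a detail the paper leaves implicit, but it matches the intended argument exactly.
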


We now complete the proof of our first main result.

\begin{proof}[Proof of Theorem \ref{thm:main1}]
We now use $\val{v}$ to refer to the value that $v$ holds after sorting. We say a node $v$ is a {\em head} or a {\em tail} if $\val{v} \neq -\infty$ and its $\ID$ is the smallest or the largest respectively among the $\ID$s of the nodes that hold the value $\val{v}$. A node $v$ can tell that if it is a head or a tail by checking with the nodes $v+1$ and $v-1$ respectively using the routing algorithm in Theorem \ref{thm:routing}. We use $\head{i}$ and $\tail{i}$ to denote the $\ID$s of the head and the tail of value $i$ respectively.

Now, every node that is not a head or a tail marks its value as $-\infty$. Each remaining node  forms a token consisting of its value, $\ID$, and whether if it is a head or a tail (or both). We then use sorting networks again to sort the values in the graph. We will also swap the tokens if two nodes swap their values. Afterward, the head and the tail tokens of a value $i$ will be at some two nodes $v$ and $v+1$ (or just at a node $v$ if $f_i=1$). To this end, each node $v$ that holds a head token (that is not also a tail token) with value $i$ will check with  nodes $v+1$ and $v-1$, using the routing algorithm, to collect $\tail{i}$ since either $v+1$ or $v-1$ must have the tail token  of $i$. Now, $v$ can compute $g(f_i) =  g(\tail{i} - \head{i} +1 )$ and set this as its value. All the nodes that do not hold a head token set their values to 0. We then compute  $\sum_{i=1}^N g(f_i)$ using the BFS tree in $O(D)$ rounds. 
\end{proof}

The algorithm above is more robust compared to the AMS sketch since it can handle all fixed and computable functions $g$. The AMS sketch cannot guarantee sublinear space in the streaming model (or sublinear time in the $\congest$ model) for many functions \cite{BravermanO10a,BravermanC15,BravermanCWY16}.  The above algorithm also immediately leads to an algorithm that finds the top $k$ frequent elements.

\paragraph{Finding  the top $k$ frequent elements}  
At the end of the above algorithm, the occurrence of each value $i$ is held by some node $v$. Recall we can find the top $k$ elements in the graph using $O(D+k)$ rounds via upcasts. This immediately leads to the following result.

\begin{theorem} \label{thm:topk}
There exists an algorithm that finds the top $k$ elements (along with their occurrences)  in the $\congest$ model in $O(k+\mixingG \cdot 2^{O(\sqrt{\log n})} )$ rounds w.h.p. 
\end{theorem}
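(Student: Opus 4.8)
The plan is to reduce the top-$k$ problem directly to the machinery already assembled in the proof of \Cref{thm:main1}. After running that algorithm, each value $i$ that occurs in the network has its occurrence count $f_i$ computed and stored at a single identifiable node (namely the node holding the head token of $i$, which has set its value to $g(f_i)$; here we simply take $g$ to be the identity so that the stored value is $f_i$ itself, and we keep the value $i$ alongside the count). Thus after $\mixingG \cdot 2^{O(\sqrt{\log n})}$ rounds we are in a clean configuration: the distinct values present in the graph are in one-to-one correspondence with a set of designated nodes, each of which knows the pair $(i, f_i)$. The remaining task is purely a selection/aggregation problem, namely extracting the $k$ pairs with the largest $f_i$.

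First I would invoke the upcast primitive stated in the preliminaries: ``broadcast the $k$ smallest values in $O(D+k)$ rounds,'' which by symmetry (negate the keys, or order by decreasing $f_i$) also lets the root of the BFS tree collect the $k$ largest occurrence counts together with their associated values $i$. Concretely, each designated node injects its pair $(f_i, i)$ into the BFS tree as a token keyed by $f_i$; internal nodes forward only the top $k$ tokens they have seen so far toward the root, so that the $k$ heaviest pairs reach the root in $O(D+k)$ rounds via pipelined upcasts. Ties in $f_i$ can be broken by the value $i$ (or the node $\ID$) to give a well-defined ordering. The root then knows the top $k$ frequent elements and can broadcast them back down the tree in another $O(D+k)$ rounds if every node is required to learn the answer.

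Adding the two phases gives a total of $\mixingG \cdot 2^{O(\sqrt{\log n})} + O(D+k)$ rounds, and since $D = O(\mixingG)$ (as noted in the excerpt just before \Cref{thm:routing}), the $O(D)$ term is absorbed into the mixing-time term, leaving $O(k + \mixingG \cdot 2^{O(\sqrt{\log n})})$ rounds, which is exactly the claimed bound. The success probability is inherited from \Cref{thm:main1}: the only randomized component is the routing/sorting used there, which succeeds w.h.p., while the upcast phase is deterministic.

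Honestly, there is no serious obstacle here—this is a short corollary rather than a theorem requiring new ideas. The one point worth stating carefully is the interface between the two phases: I would make explicit that the configuration produced by the \Cref{thm:main1} algorithm really does leave each distinct value represented by exactly one node holding its count, so that the upcast does not double-count or miss values, and that each node's injected token fits in $O(\log n)$ bits (which holds since both $i \le N = \poly(n)$ and $f_i \le n$). Given those observations, the result follows immediately from the $O(D+k)$ top-$k$ upcast primitive.
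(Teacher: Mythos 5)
Your proposal is correct and matches the paper's own argument: the paper likewise observes that after the \Cref{thm:main1} algorithm each value's occurrence count resides at a single node, then applies the $O(D+k)$ pipelined-upcast primitive and absorbs the $O(D)$ term via $D = O(\mixingG)$. Your write-up merely makes explicit the details (taking $g$ to be the identity, keeping $i$ with the count, token size) that the paper leaves implicit.
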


%
%


\section{Emulation of GOSSIP Model in the CONGEST Model}

In Section \ref{sec:exactcongest}, we have shown that the moments can be computed exactly in $\tau_G \cdot 2^{O(\sqrt{\log n})}$ rounds. If the permutation routing algorithm can be improved to $\polylog(n)$ rounds, then the running time of our algorithms would be improved to $\tilde{O}(\tau_G)$ rounds. Whether the $2^{O(\sqrt{\log n})}$ factor can be improved to $\polylog(n)$ is an intriguing open question.

Instead of tackling the complexity of permutation routing, in this section, we show that one round of the \textsf{GOSSIP} model can be emulated almost-perfectly in $\tilde{O}(\tau_G)$ rounds in the \textsf{CONGEST} model. Therefore, if there is a $\polylog(n)$-round algorithm in the \textsf{GOSSIP} model, it implies a $\tilde{O}(\tau_G)$ rounds algorithm in the \textsf{CONGEST} model. In Section \ref{sec:gossip}, we present efficient algorithms in the $\gossip$ model when $F_0$ is small (or when the number of empty nodes is large) which can be translated back to the \textsf{CONGEST} model using the emulation result in this section.

Recall that $P^t_u=(P^t_u(v_1), \ldots, P^t_u(v_n)) \in [0,1]^{n}$ denotes the probability distribution on the nodes after $t$ steps of a lazy random walk that starts at $u$ (see Section \ref{sec:prelim}). Given $\lambda$, we let $\tau_G(\lambda)$ be the smallest $t$ such that for any starting node $u$ and any node $v_i$,
$$\left\vert P^t_u(v_i) - \frac{\deg(v_i)}{2m} \right\vert \leq \lambda.$$ Note that if $\lambda = 1/\poly(n)$ then $\tau_G(\lambda) = O(\tau_G)$ \cite[Definition 2.1]{GhaffariKS17}.

We will run several random walks in parallel. The following lemma from \cite{GhaffariKS17} shows that the parallel random walks can be performed efficiently in the \textsf{CONGEST} model.

\begin{lemma}[\cite{GhaffariKS17}, Lemma 2.5]\label{lem:parallelrandomwalk}Let $G = (V,E)$ be an $n$-node graph and let $t \geq 1$ be a positive integer. Assume that we perform $T = O(\poly(n))$ steps of a collection of independent random walks in parallel. If each node $u \in V$ is the starting node of at most $t \cdot \deg(u)$ random walks, w.h.p., the $T$ steps of all the random walks can be performed in $O((t + \log n) \cdot T)$ rounds in the \congest model. \end{lemma}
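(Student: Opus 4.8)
The plan is to reduce the cost of simulating one random-walk step to the worst-case \emph{edge congestion}, and then bound that congestion uniformly over all $T$ steps. In one step of a lazy walk, every token (walk) sitting at a node either stays put, which costs no communication, or traverses exactly one incident edge. Since a \congest edge can carry only $O(1)$ tokens of $O(\log n)$ bits per round, the number of rounds needed to realize step $r$ is, up to constants, the maximum over all edges of the number of tokens that must cross that edge in step $r$. So the goal reduces to proving that, w.h.p., for every edge and every step $r \le T$, at most $O(t + \log n)$ walks cross that edge in step $r$; summing over the $T$ steps then yields the claimed $O((t+\log n)\cdot T)$ bound.

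The heart of the argument is an expectation computation. Fix an edge $(u,v)$ and a step $r$. A walk that started at $s$ sits at $u$ after $r$ steps with probability $P^r_s(u)$, and conditioned on that it crosses to $v$ with probability $\tfrac{1}{2\deg(u)}$. By hypothesis each $s$ launches at most $t\,\deg(s)$ walks, so the expected number crossing $(u,v)$ in the $u\to v$ direction at step $r$ is at most
\[
\sum_{s\in V} t\,\deg(s)\cdot P^r_s(u)\cdot \frac{1}{2\deg(u)}
= \frac{t}{2\deg(u)}\sum_{s\in V}\deg(s)\,P^r_s(u).
\]
The lazy walk is reversible with respect to the degree distribution, so detailed balance gives $\deg(s)\,P^r_s(u)=\deg(u)\,P^r_u(s)$ for every $s$; hence $\sum_s \deg(s)P^r_s(u)=\deg(u)\sum_s P^r_u(s)=\deg(u)$, and the expected directional congestion is at most $t/2$, \emph{independently of} $r$. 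The same holds in the $v\to u$ direction, so the expected congestion on any edge at any step is $O(t)$.

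For concentration, I would use that the walks are mutually independent, so the congestion on a fixed edge at a fixed step is a sum of independent $0/1$ indicators, one per walk. A Chernoff bound then shows it is $O(t+\log n)$ with probability at least $1-1/\poly(n)$, where the additive $\log n$ absorbs the Chernoff tail when $t$ is small. A union bound over the $O(n^2)$ edges and the $T=\poly(n)$ steps keeps the total failure probability polynomially small. On this event each step is routed in $O(t+\log n)$ rounds — each edge simply forwards its $O(t+\log n)$ waiting tokens one per round in each direction — so the whole process finishes in $O((t+\log n)\cdot T)$ rounds.

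The delicate point, and the reason a naive approach fails, is the dependence across steps: a walk's position at step $r+1$ is determined by its position at step $r$, so one cannot treat the $T$-step process as a single independent experiment, and an inductive bound on the per-node load would accumulate additive fluctuation that risks blowing up over $T$ steps. The argument above sidesteps this by analyzing each (edge, step) pair in isolation, where the only independence invoked is independence \emph{across distinct walks}. The load control that an inductive argument would struggle to maintain is instead supplied uniformly in $r$ by the reversibility identity $\deg(s)P^r_s(u)=\deg(u)P^r_u(s)$, which keeps the expected load at each node proportional to its degree for all time; this identity is the crux of the proof, and I expect verifying its clean use to be the main technical care point.
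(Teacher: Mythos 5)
Your proof is correct, and it is essentially the argument behind this lemma: the paper itself imports the statement from \cite{GhaffariKS17} without proof, and the proof there rests on exactly the same two ingredients you identify --- the degree-proportional load invariance (via reversibility/stationarity, keeping the expected load at each node at $O(t\cdot\deg(u))$ uniformly in the step index, hence $O(t)$ expected congestion per edge) followed by a Chernoff bound and a union bound over all (location, step) pairs, with independence invoked only across distinct walks. The only quibble is a harmless off-by-one (a walk crossing an edge at step $r$ sits at $u$ after $r-1$ steps, not $r$), which does not affect the identity $\sum_{s}\deg(s)P^{r}_s(u)=\deg(u)$ or the resulting bound.
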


The main technical difficulty of the emulation lies in the fact that the stationary distribution is not necessarily uniform in general graphs. If $G$ is regular, we could let each node $u$ start a random walk that runs for $O(\tau_G)$ steps. The probability that $u$ ends at each node is (nearly) uniform. If it ends at $v$ then we set $t(u) = v$. Moreover, by Lemma \ref{lem:parallelrandomwalk}, all the random walks can be performed simultaneously in $\tilde{O}(\tau_G)$ rounds. 

In irregular graphs, such  approach does not work because the stationary distribution is not uniform. One remedy is to regularize the random walk (i.e.~adding self-loops to non-maximum degree nodes). However, this may significantly increase the mixing time of the graph (e.g., a star graph). In the following, we give an emulation algorithm whose running time is within a $\polylog(n)$ factor of the mixing time.
 
For each node $u$ in $G$, we split it into $\deg(u)$ {\it compartments}. When a random walk enters a node, it is assigned randomly to one of its compartments. There are $2m$ compartments in  $G$ in total. We outline the emulation algorithm below.

\begin{enumerate}

\item\label{step:1} Let $k = \lfloor 1.5m / n \rfloor$.  Each node creates $k$ {\it destination tokens} and distributes them over the compartments in $G$ so that each compartment contains at most {\it one} destination token. Now $n \cdot k \approx 1.5m$ compartments are filled with tokens. 

\item\label{step:2} Each node sends out a {\it source token}. Each source token starts a random walk to distribute itself randomly over the compartments at the end. If the source token of node $u$ ends in a compartment with the destination token of some node $v$, we set $t(u) = v$. 

\item\label{step:3} Route the message between $u$ and $t(u)$ for each $u$ simultaneously. 

\end{enumerate}

\begin{figure}
\centering
\begin{subfigure}[t]{0.47\textwidth}
\includegraphics[scale = 0.35]{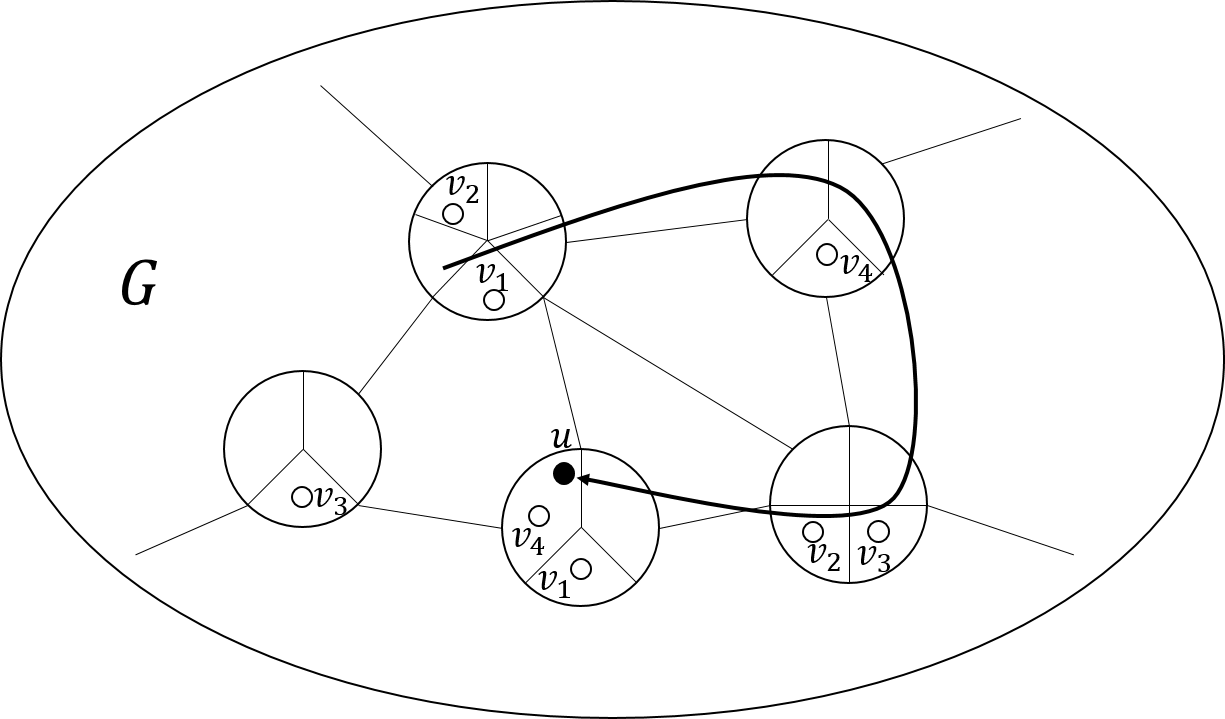} 
\caption{Illustration of Step \ref{step:2}. The random walk of $u$'s source token ends in the compartment containing the destination token of $v_4$. Thus, $t(u) = v_4$.}\label{fig:1a}
\end{subfigure}
\quad
\begin{subfigure}[t]{0.46\textwidth}\
\includegraphics[scale = 0.35]{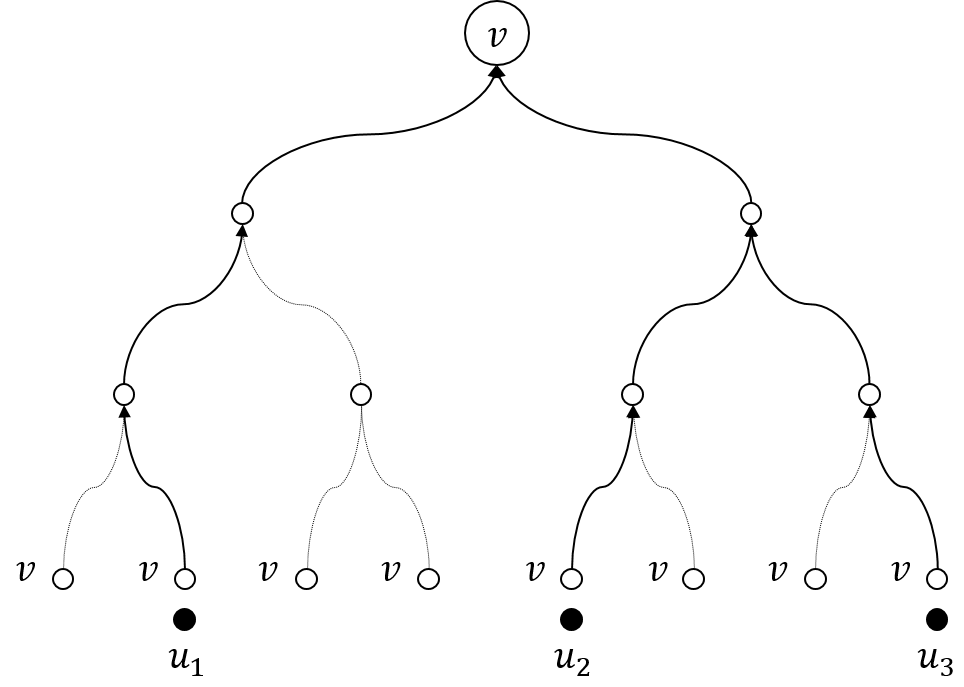}
\caption{Illustration of Step \ref{step:3}. $u_1,u_2,u_3$ will follow the paths taken by the destination tokens of $v$ back to $v$. The paths may overlap. W.h.p.~every edge is contained in at most $O(\log n)$  paths.}\label{fig:1b}
\end{subfigure}
\caption{}
\end{figure}

We explain how to implement each step in details.

\paragraph{Step 1} Each node $u$ creates a destination token $(u,k)$ initially. The first component of the token is its identity while the second component of the token is its multiplicity. The goal is to split the tokens and distribute them across the compartments so that all tokens have multiplicity of 1 and each compartment holds at most one token. We divide Step 1 into the {\bf splitting phase} and the {\bf distributing phase}. 

The splitting phase is further divided into $\lceil \log k \rceil$ stages. At the beginning of each stage,  if $W > 1$, each token $(u,W)$ is split into two tokens $(u,\lceil W/2 \rceil)$ and $(u,\lfloor W/2 \rfloor)$. Then all tokens perform $\tau_G$ steps of random walks.

We show that w.h.p., there are at most $O(\log n)$ tokens per compartment at the end of each stage. Given a stage, the probability that a token ends up in a given compartment in node $v$ is at most $$\left(\frac{\deg(v)}{2m} + \frac{1}{2mn}\right) \cdot \frac{1}{\deg(v)} \leq \frac{1}{m}.$$
Since there are at most $k \cdot n \leq 1.5m$ tokens, there are at most $O(1)$ tokens ending in a compartment in expectation. By standard Chernoff and union bound argument, w.h.p.~there are at most $O(\log n)$ tokens in each compartment.

Moreover, since each node $u$ holds at most $\deg(u) \cdot O(\log n)$ tokens at the beginning of each stage, the random walks can be performed in parallel in $O(\tau_G \cdot \log n)$ rounds by Lemma \ref{lem:parallelrandomwalk}. Therefore, the splitting phase uses $O((\log k) \cdot (\tau_G \cdot (\log n)) ) = \tilde{O}(\tau_G)$ rounds. At the end of the splitting phase, the multiplicity of each token is one. Moreover, w.h.p.~each compartment contains at most $O(\log n)$ tokens.

In the distributing phase, a compartment containing more than one token will start the random walks on {\it all except one} of its token for $\tau_G(0.1/2m)$ steps. Again, by Lemma \ref{lem:parallelrandomwalk}, this can be done simultaneously for all nodes in $O(\tau_G \cdot \log n)$ rounds. At the end of the random walks, we say a token succeeds if it ends at a compartment without any other tokens. If a token does not succeed, it will go back to the origin. The process is repeated until there is no compartment containing more than one token. Since there are at most $n \cdot k \leq 1.5m$ tokens, at most $1.5m$ compartment can be occupied.  Since we run the random walks for $\tau_G(0.1/2m)$ steps, the probability that a random walk ends at a specific compartment is at most $1.1/2m$. Thus, the probability that a token does not succeed is at most $(1.5m) \cdot (1.1/ (2m)) = 1.65/2 $.

Therefore, a token will succeed w.h.p.~after at most $O(\log n)$ trials. By a union bound over the tokens, w.h.p.~all tokens succeed after $O(\log n)$ trials. The total running time is $O(\log n \cdot (\tau_G \log n)) = \tilde{O}(\tau_G)$.

\paragraph{Step 2} Each node $u$ creates a source token. The tokens start to perform random walk for $\tau_G(\lambda')$ steps, where $\lambda' = \min(\lambda/(8m), 0.1/m )$ (see Figure \ref{fig:1a}). If the source token of $u$ ends up in one out of the $k$ compartments with a destination token of $v$, $t(u)$ will be set to $v$. Otherwise, if it ends up in a compartment without any destination tokens, it will restart the random walk. The process will be repeated until the source token ends up in a compartment with some destination token. 

By our choice of $\lambda'$, the probability that a token ends at a specific node is at least $0.9/(2m)$. Therefore, the probability that a token successfully ends up in a compartment with a destination token after the random walk is at least 
\begin{align*}
n   k  \cdot \frac{0.9}{2m} &\geq (1.5m - n) \cdot \frac{0.9}{2m} \geq (1.5m - m-1) \cdot \frac{0.9}{2m}  
\geq \left(\frac{1}{4} - \frac{1}{2m} \right) \cdot 0.9  \geq 0.9/8 ~. 
\end{align*}

The second inequality follows from $ m \geq n-1$ and the third inequality holds for $m \geq 4$. Thus, the number of random walks a token needs to perform until it ends up at a node with some destination token is at most $O(\log n)$ w.h.p. By taking a union bound over all the $n$ tokens, we conclude that w.h.p.~every token performs at most $O(\log n)$ random walks. The random walks can be performed simultaneously in $O(\tau_G \cdot \log n)$ rounds, so w.h.p.~the total number of rounds is $O(\tau_G \cdot \log^2 n)$.

Next, we show that given two nodes $u,v$, $\Pr(t(u)= v) \in [(1-\lambda)/n, (1+\lambda)/n]$. Let $\mathcal{E}_v$ denote the event that the source token of $u$ ends up in a compartment with a destination token of $v$. Let $\mathcal{E}$ denote the event that the source token of $u$ ends up in a compartment with some destination token. 

By our choice of $\tau(\lambda')$, we have that for all $v$, $\Pr(\mathcal{E}_v) \in \left[\frac{k}{2m}-k\lambda', \frac{k}{2m}+k\lambda' \right]$ and $\Pr(\mathcal{E}) \in \left[n \left(\frac{k}{2m}-k\lambda' \right), n \left(\frac{k}{2m}+k\lambda'\right) \right]$. Therefore,
\begin{align*}
\frac{\frac{k}{2m} - k\lambda'}{n \left(\frac{k}{2m} + k \lambda' \right) }	&\leq \Pr(t(u) = v) \leq \frac{\frac{k}{2m} + k\lambda'}{n \left(\frac{k}{2m} - k \lambda' \right)} \\
\frac{1}{n} \cdot \frac{1 - 2m\lambda'}{1 + 2m\lambda'}	&\leq \Pr(t(u) = v) \leq \frac{1}{n} \cdot \frac{1 + 2m\lambda'}{1 - 2m\lambda'} \\
\frac{1}{n} \cdot (1- 8m\lambda')&\leq \Pr(t(u) = v) \leq \frac{1}{n} \cdot (1+8m\lambda') && \mbox{when $\lambda'$ is sufficiently small} \\
\frac{1}{n} \cdot (1- \lambda)&\leq \Pr(t(u) = v) \leq \frac{1}{n} \cdot (1+ \lambda) && \lambda' \leq \lambda / 8m~.
\end{align*}

Note that since all the source tokens perform random walks independently, when we condition on the choice of nodes in $Z$ for any $u \notin Z \subseteq V$, it is still true that 
\[
\Pr \left(\mathcal{E}_v \big| \bigwedge_{z \in Z} t(z) \right) \in \left[\frac{k}{2m}-k\lambda', \frac{k}{2m}+k\lambda' \right]\] 
and 
\[
\Pr \left(\mathcal{E} \big| \bigwedge_{z \in Z} t(z) \right) \in \left[n(\frac{k}{2m}-k\lambda'), n(\frac{k}{2m}+k\lambda')\right]~.\] 
Thus, $\Pr \left(t(u)= v | \bigwedge_{z \in Z} t(z) \right) \in [(1-\lambda)/n, (1+\lambda)/n]$.

\paragraph{Step 3} It remains to show that the messages from $u$ to $t(u)$ can be routed simultaneously for every $u$ in $\tilde{O}(\tau_G)$ rounds. 

Let $mid(u)$ denote the node where the source token of $u$ is located at the end of Step \ref{step:2}. The message from $u$ to $mid(u)$ for every $u$ can be simultaneously routed in $\tilde{O}(\tau_G)$ rounds by following the same path taken by the random walk of the source token of $u$.

Suppose that $t(u) = v$. After the message reaches $mid(u)$, it will follow the path taken by the random walk of the destination token of $v$ to go to $v$ (see Figure \ref{fig:1b}). Note that multiple source tokens may be matched to a node $v$ (some possibly from the other destination tokens of $v$). When they follow the paths that lead back to $t(v)$, it is possible that these paths merge and create congestion. However, using a standard Chernoff Bound argument, we can show that for any node $v$ w.h.p.~at most $O(\log n)$ different nodes $u$ have $t(u) = v$. Therefore, each step of the parallel random walk can be done with a $O(\log n)$ factor blowup. Thus, the messages  between $u$ and $t(u)$ can be routed in $\tilde{O}(\tau_G)$ rounds. This completes the proof of Theorem \ref{thm:gossipemu}.


\section{Algorithms in the GOSSIP Model}\label{sec:gossip}

In this section, we show that if we have a small number of non-empty nodes, then the empty nodes help approximate $F_p$ faster.  As stated in Corollary \ref{cor:main4}, this result can be translated back to the $\congest$ model using Theorem \ref{thm:gossipemu} with a  blow-up factor $\tilde{O}(\mixingG)$. We exhibit a pre-processing step that duplicates the values so that $\Omega(n)$ nodes become non-empty which is crucial for the algorithms to work while preserving the occurrence ratios. 

Throughout this section, for the sake of clarity, we consider the $\gossip(0)$ model. However, running our algorithms in $\gossip(1/n^c)$, for some sufficiently large constant $c$, only incurs a small additive error $1/\poly(n)$.

\begin{lemma}\label{thm:gossip-preprocessing}
If the number of non-empty nodes $z < n/3$, we can duplicate the values  so that $z \roundup{(n/3)/z}$ nodes become non-empty while preserving the occurrences ratios in $O(\log^2 n)$ rounds in the \gossip model.
\end{lemma}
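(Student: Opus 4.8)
The plan is to multiply the frequency of every value by exactly $c := \roundup{(n/3)/z}$, placing the new copies on previously empty nodes; since each $f_i$ then becomes $c f_i$, the occurrence ratios are preserved and the number of non-empty nodes becomes $c\sum_i f_i = cz = z\roundup{(n/3)/z}$, as required. The first thing I would record is that there is always plenty of room: the final number of occupied nodes is $cz \le z\cdot((n/3)/z + 1) = n/3 + z < 2n/3$, so the total budget is conserved throughout and at least $n/3$ nodes remain empty at every point in time.

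The core idea is a binary-splitting \emph{budget} scheme, mirroring the splitting phase within Step~\ref{step:1} of the emulation of the previous section. Each non-empty node $u$ creates a token $(\val{u}, c)$ whose second coordinate is a budget counting how many copies of $\val{u}$ this token is still responsible for producing (itself included). I would then proceed in $\lceil \log_2 c\rceil = O(\log n)$ stages: at the start of a stage every token with budget $W \ge 2$ spawns a child token with budget $\lfloor W/2\rfloor$ and reduces its own budget to $\lceil W/2 \rceil$. Because $\lceil W/2\rceil + \lfloor W/2\rfloor = W$, the total budget carried by all tokens of a value $i$ is invariant and equals its initial value $c f_i$, and after $\lceil \log_2 c \rceil$ stages every surviving token has budget $1$. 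At that point the number of tokens (hence non-empty nodes) holding value $i$ is \emph{exactly} $c f_i$, which gives the claimed count and preserves the ratios. What remains in each stage is to turn every spawned child token, which is born co-located with its parent, into a token sitting alone on a previously empty node.

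This placement is the one genuinely distributed step. Keeping the invariant that each node holds at most one placed token, I would distribute the child tokens of a stage onto distinct empty nodes in $O(\log n)$ rounds of gossip with retries: in each round every unplaced child is carried by its holder via a PUSH to a uniformly random node, the contacted empty node accepts a single incoming child (breaking ties by, say, smallest source $\ID$) and acknowledges the outcome, so that rejected children simply return to their parents and retry (a constant number of gossip contacts per attempt suffices to exchange the acknowledgement). Since at most $cz \le 2n/3$ nodes are ever occupied, a random target is empty with probability at least $1/3$; conditioned on that, the chosen empty node receives no competing child with probability at least $(1-1/n)^{n} = \Omega(1)$, as there are at most $n$ tokens. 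Hence each unplaced child succeeds with constant probability per round, and by a Chernoff and union bound over the at most $n$ tokens all children are placed within $O(\log n)$ rounds w.h.p. Summing over the $O(\log n)$ stages gives the claimed $O(\log^2 n)$ rounds.

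The main obstacle is precisely this contention resolution: guaranteeing that every child lands on its own empty node using only gossip PUSH/PULL contacts, and that the constant success probability is sustained across all stages. The budget bookkeeping is what makes the final count exact rather than merely approximate, so the argument must ensure that no child is ever lost or duplicated during the retries; fortunately a rejected child only ever returns to its parent, so the budget-conservation invariant is untouched and the exactness follows directly from it.
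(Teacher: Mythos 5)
Your high-level plan---split each value's multiplicity token $(\val{u},c)$ with $c=\roundup{(n/3)/z}$ through $O(\log n)$ halving stages, conserve the total budget so the final count is exactly $c f_i$ copies of each value $i$, and place budget-$1$ tokens on distinct empty nodes with constant success probability per attempt, for $O(\log^2 n)$ rounds total---is the same approach as the paper's proof (which additionally computes $z$ first in $O(\log n)$ rounds by gossip aggregation, a step you use implicitly when forming $c$). The genuine gap is in your placement mechanism. You have a rejected child ``return to its parent'' and assert that ``a constant number of gossip contacts per attempt suffices to exchange the acknowledgement.'' In the \gossip model as defined, every contact goes to a \emph{fresh uniformly random} node: the parent cannot re-contact the specific target to learn the outcome, and the target cannot address the specific parent to report it. A PUSH delivers one one-way message, and the model does not grant a reply channel within the contact. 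So, as written, a parent cannot know whether its child was accepted: retrying after an unseen acceptance duplicates budget, while giving up after an unseen rejection loses it---precisely the loss/duplication failure your exactness argument must rule out, and the one part of the argument you yourself flag as the genuinely distributed step.

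The paper's proof avoids acknowledgements entirely by never returning or re-issuing tokens from a parent. In its splitting phase, \emph{both} halves of every split are pushed to random nodes, tolerating $O(\log n)$ tokens piling up per node w.h.p.\ and flushing them over $O(\log n)$ rounds per stage; in its distributing phase, any node holding more than one budget-$1$ token simply forwards all but one to random nodes each stage, and a token ``succeeds'' when it lands alone on a previously empty node---a condition the \emph{receiver} checks locally (it knows whether it was empty and how many tokens arrived that round), whereupon it keeps that token permanently. Since tokens are forwarded and never copied, budget conservation---your key invariant---holds with no round-trip communication, and the same $\geq 1/3$ empty-target and constant no-collision bounds you computed give $O(\log n)$ stages w.h.p. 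Your scheme can be repaired identically: rather than bouncing a rejected child back to its parent, let the contacted node buffer it and forward it onward next round. Without such a change, the retry-with-acknowledgement step does not go through in the stated model.
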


\begin{proof}

We divide the process into three phases. \paragraph{Pre-processing} We assume that the number of non-empty nodes is less than $n/3$, otherwise, we are done. First, the nodes compute the number of non-empty nodes $z$ in $O(\log n)$ rounds \cite{kempe2003gossip}. Each node $v$ will form a token that contains $\val{v}$ and $t$ where $t$ is originally set to $\roundup{(n/3)/z}$.

{\bf Splitting Phase:} This phase consists of $O(\log n)$ stages each of which consists of $O(\log n)$ sub-stages. At the beginning of each stage, a node $v$ has a collection of tokens $(x_1,t_1),(x_2,t_2),\ldots$ in its buffer. Each token $(x_i,t_i)$ is split into two tokens  $(x_i,\roundup{t_i/2})$ and $(x,\floor{t_i/2})$. It  will send these two tokens to two random nodes using two rounds and delete $(x_i,t_i)$ from its buffer. Note that the new tokens $(x_i,\roundup{t_i/2})$ and $(x,\floor{t_i/2})$ will not be split until the next stage. Every stage produces at most $z\roundup{(n/3)/z} \leq 2n/3$ new tokens. Each new token is sent to a random node and therefore each node contains $O(\log n)$ new tokens w.h.p by Chernoff bound at the end of that stage. Hence, each sub-stage requires at most $O(\log n)$ rounds to split all the tokens in its buffer w.h.p. After $O(\log n)$ stages,   w.h.p  all nodes contain $O(\log n)$ tokens and  all tokens $(x,t)$ satisfy $t = 1$.

{\bf Distributing Phase:} At this point, we  only have tokens in the form $(x,1)$, or simply $x$. In each stage, if $v$ holds more than one token, it will send all but one  token (say the first that arrives at $v$) to the nodes that it talks to. By a standard Chernoff bound argument, each stage requires $O(\log n)$ rounds since each node always holds at most $O(\log n)$ tokens w.h.p. 
We say a token $x$ succeeds if it lands in a previously empty node $u$ while no other token lands in $u$ in the same round. Then, $u$ never sends $x$ away from this point onward. 
Since we have at most $z \cdot \roundup{(n/3)/z} \leq 2n/3$ tokens, at least $n/3$ nodes are empty at all times. Consider a  token $x$. In each stage, conditioning on all other tokens' choices, with probability at least $1/3$,  $x$ succeeds. Hence, after $O(\log n)$ stages, $x$ succeeds w.h.p and therefore all tokens succeed w.h.p by taking a union bound over all tokens. Since we have at least $\roundup{n/3}$ tokens, the number of non-empty nodes is $\Omega(n)$. Note that the occurrence of each value is rescaled by a factor $\roundup{(n/3)/z}$.
\end{proof}

After we estimate $F_p$ of the new instance, we can divide the estimator by $(\roundup{(n/3)/z})^p$ to get an estimate for $F_p$ in the original instance. From now on, we can safely assume that the number of non-empty nodes $F_1 = \Omega(n)$, otherwise, we can apply the above pre-processing. A {\em key observation} is that  $F_0 \leq z$, and thus we can analyze our algorithms for when $F_0$ is small instead. 

\paragraph{An $\ell_p$-sampling primitive} An  $\ell_p$-sampling algorithm samples a value $i \in [N]$ with probability $f_i^p / F_p$. More formally, $\Prob{\text{sample $i$}} = {f_i^p}/{F_p}$.

The $\ell_p$-sampling primitive (for $0 \leq p \leq 2$) has been extensively studied in the data stream model. An incomplete list includes \cite{AndoniKO11,JayaramW18,MonemizadehW10,JowhariST11}.  However, most streaming $\ell_p$-samplers are  rather complicated, and it is unclear how to implement them in the $\gossip$ model.

It is trivial to obtain an $\ell_1$-sample by virtue of the $\gossip$ model. To obtain an $\ell_0$-sample (a random value that occurs at least once), we broadcast a randomly chosen pairwise hash function $h: [N] \rightarrow [N^3]$ and identify the value corresponds to the smallest hash value in $O(\log n)$ rounds.

Assuming that $p$ is fixed, we now show that if $F_0 = O \left( n^{1/(p-1)} \right)$, then we can perform $\ell_p$-sampling in $O(\log n)$ rounds (hence $\ell_2$-sampling can always be done in $O(\log n)$ rounds since $F_0 \leq n$).  The sampling algorithm proceeds as follows. 

Each node $v$ uses $p$ rounds to talk to $p$ random nodes $u_1,\ldots,u_p$. It declares success if $\val{u_1} = \ldots = \val{u_p} $. In that case, let $\val{u_1}$ be $v$'s sample. Among the successful nodes, to break symmetry, broadcast the sample of the node with the smallest $\ID$. If no node succeeds, repeat the process. The following lemma provides a lower bound on $F_p$ based on $F_0$.
\begin{lemma} \label{lem:F0-Fp}
If $F_1= \Omega(n)$ and $F_0 = O \left(n^{1/(p-1)} \right)$, then $F_p  = \Omega \left( n^{p-1} \right)$.
\end{lemma}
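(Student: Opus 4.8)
The plan is to bound $F_p$ from below using only the two quantities that appear in the hypotheses, namely the number of distinct values $F_0$ and the total mass $F_1 = \sum_i f_i$. The natural tool is the comparison between the $p$-th power mean and the arithmetic mean of the $F_0$ nonzero frequencies; equivalently, Jensen's inequality applied to the convex map $x \mapsto x^p$, or Hölder's inequality with a clever choice of exponents.

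Concretely, the first step is to establish the inequality
$$F_p = \sum_{i : f_i > 0} f_i^p \geq \frac{F_1^p}{F_0^{p-1}}.$$
This holds because there are exactly $F_0$ nonzero terms, and for $p \geq 1$ the power-mean inequality gives $\left(\frac{1}{F_0}\sum_i f_i^p\right)^{1/p} \geq \frac{1}{F_0}\sum_i f_i = \frac{F_1}{F_0}$; raising both sides to the $p$-th power and multiplying by $F_0$ yields the claim. Alternatively, one applies Hölder with exponents $p$ and $p/(p-1)$ to the identity $\sum_i f_i = \sum_i f_i \cdot 1$, where the sums range over the $F_0$ nonzero frequencies, to obtain $F_1 \leq F_p^{1/p} \cdot F_0^{(p-1)/p}$, which rearranges to the same bound.

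The second step is to substitute the hypotheses $F_1 = \Omega(n)$ and $F_0 = O(n^{1/(p-1)})$ into this bound. Then $F_0^{p-1} = O(n)$, so
$$F_p \geq \frac{F_1^p}{F_0^{p-1}} = \Omega\left(\frac{n^p}{n}\right) = \Omega(n^{p-1}),$$
which is exactly the desired conclusion.

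I do not anticipate a genuine obstacle: the argument is essentially a single application of a standard convexity/power-mean inequality combined with the two hypotheses. The only point requiring mild care is the bookkeeping of the hidden constants in $\Omega(\cdot)$ and $O(\cdot)$. Since $p$ is a fixed constant, the constant factor implicit in $F_1 = \Omega(n)$ contributes only a constant factor after being raised to the $p$-th power, and the constant implicit in $F_0 = O(n^{1/(p-1)})$ contributes only a constant factor in the denominator; hence the final $\Omega(n^{p-1})$ bound survives with a (fixed) constant depending only on $p$.
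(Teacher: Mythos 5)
Your proof is correct and follows essentially the same route as the paper: the paper also applies the $\ell_1$-to-$\ell_p$ norm comparison $\lVert f' \rVert_1 \leq F_0^{1-1/p} \lVert f' \rVert_p$ (Hölder/power-mean over the $F_0$ nonzero entries, with $F_0 = O(n^{1/(p-1)})$ so that $F_0^{1-1/p} = O(n^{1/p})$) and then substitutes $F_1 = \Omega(n)$ to conclude $F_p \geq F_1^p / F_0^{p-1} = \Omega(n^{p-1})$. Your constant bookkeeping, using that $p$ is fixed, matches the paper's handling of $K$ and $C$.
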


\begin{proof}
Let the frequency vector be $f = \left( f_1,\ldots,f_N \right)$. Without loss of generality, suppose the potentially non-zero entries of $f$ be $f_1,\ldots,f_{K n^{1/(p-1)}}$ for some constant $K$. Note that based on our assumption, $f_j = 0$ for all $j > K n^{1/(p-1)}$. Let $f' = \left(f_1,\ldots, f_{K n^{1/(p-1)}} \right)$ be the vector formed by the first $K n^{1/(p-1)}$ entries.  Note that $\| f' \|_1 \geq C n$ for some constant $0 < C \leq 1$ as assumed.

We will use the following inequality: if the vector $x$ has $n$ entries then
\[
\| x \|_q \leq \left(n^{1/q - 1/p} \right) \| x  \|_p~ \text{, for $0 < q < p$~.}
\]

Note that $f'$ has $K n^{1/(p-1)} $ entries. Let $K' = K^{1-1/p}$. We have 
\begin{align*}
\left(K n^{1/(p-1)} \right)^{1-1/p } \left\| f' \right\|_p & \geq  \| f' \|_1 \\
\| f' \|_p & \geq \frac{\| f' \|_1}{K' n^{1/p}} \\
F_p & \geq \frac{C^p n^p}{K^{p-1} n} = \Omega\left(n^{p-1}\right)~.
\end{align*}
The last step follows since $K$ and $C$ are constants and $p$ is fixed.
\end{proof}

\begin{theorem}
If $F_0=O\left(n^{1/(p-1)}\right)$, then the described algorithm obtains an  $\ell_p$-sample in $O(\log n)$ rounds in the \gossip model w.h.p.
\end{theorem}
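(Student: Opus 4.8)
The plan is to split the argument into two parts: first that a single successful trial produces an exact $\ell_p$-sample, and second that the whole process terminates within $O(\log n)$ rounds with high probability. I would begin with the single-trial calculation. In the $\gossip(0)$ model the $p$ contacted nodes $u_1,\dots,u_p$ are independent uniform samples over the $n$ nodes, so $\Pr(\val{u_1}=\dots=\val{u_p}=i)=(f_i/n)^p=f_i^p/n^p$. Summing over $i$, a fixed node $v$ succeeds with probability $\sum_i f_i^p/n^p = F_p/n^p$, and conditioned on success its reported sample is $i$ with probability $(f_i^p/n^p)/(F_p/n^p)=f_i^p/F_p$. Hence a single successful trial is already distributed exactly as an $\ell_p$-sample; running in $\gossip(1/n^c)$ only perturbs the uniform contact distribution by $1/\poly(n)$, giving the claimed negligible additive error.

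For the efficiency part I would invoke Lemma \ref{lem:F0-Fp}: after the pre-processing of Lemma \ref{thm:gossip-preprocessing} we may assume $F_1=\Omega(n)$, and together with $F_0=O(n^{1/(p-1)})$ this yields $F_p=\Omega(n^{p-1})$. Therefore each node succeeds in a single trial with probability $F_p/n^p=\Omega(1/n)$, so the probability that \emph{none} of the $n$ independently-trying nodes succeeds in a given trial is at most $(1-\Omega(1/n))^n\le e^{-\Omega(1)}$, a constant bounded away from $1$. Running $\Theta(\log n)$ trials then drives the probability that no node ever succeeds down to $e^{-\Omega(\log n)}=1/\poly(n)$, so w.h.p. at least one successful trial exists.

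The subtle step, and the one I expect to require the most care, is the symmetry-breaking, together with keeping the overall round count at $O(\log n)$ rather than $O(\log^2 n)$. Naively performing a detection-and-broadcast after every trial would cost $O(\log n)$ rounds per trial; instead I would run the $\Theta(\log n)$ "contact $p$ random nodes" attempts back-to-back (each attempt being $p=O(1)$ gossip rounds, hence $O(\log n)$ rounds in total), have every node record whether, in which attempt, and with what value it succeeded, and only then perform a \emph{single} winner-selection broadcast, which costs $O(\log n)$ rounds in the gossip model. The winner is the successful $(\text{attempt},\ID)$ pair that is smallest in lexicographic order, and its recorded value is the output.

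The reason this preserves exact $\ell_p$-sampling is that the tie-breaking rule depends only on which node succeeded in which attempt, never on the sampled values, and the trials are mutually independent across nodes and across attempts. Thus, conditioning on a particular pair $(r,v)$ being the winner — an event determined by the success indicators of $v$'s attempt $r$ and of all lexicographically smaller pairs, all independent of $v$'s attempt-$r$ sampled value once that attempt is known to have succeeded — leaves $v$'s reported value $\ell_p$-distributed. Summing over the identity of the winner then gives $\Pr(\text{output}=i)=f_i^p/F_p$, completing both the correctness and the $O(\log n)$-round bound. The main obstacle to keep in check is precisely this coupling of the efficiency design (batching the attempts before a single broadcast) with value-obliviousness of the selection rule, so that shaving the round count does not introduce bias into the output distribution.
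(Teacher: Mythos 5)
Your proposal is correct and follows essentially the same route as the paper's proof: pre-process so that $F_1=\Omega(n)$, invoke Lemma \ref{lem:F0-Fp} to get a per-node per-trial success probability of $\Omega(1/n)$, bound the probability that all $n$ nodes fail by a constant, repeat $\Theta(\log n)$ times for w.h.p.\ success, and observe that conditioned on success the sampled value is distributed exactly as $f_i^p/F_p$. The one place you go beyond the paper is the symmetry-breaking accounting --- batching all $\Theta(\log n)$ attempts before a single $O(\log n)$-round winner broadcast, and verifying that the value-oblivious lexicographic selection rule introduces no bias --- which is a legitimate tightening of a detail the paper's proof leaves implicit (its stated algorithm appears to broadcast per trial, which naively costs $O(\log^2 n)$ rounds), rather than a genuinely different argument.
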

\begin{proof}
We can apply the pre-processing step so that $F_1 = \Omega(n)$ while the occurrences ratios are preserved. The probability that a node succeeds is $\Omega \left(\sum_{i=1}^N {f_i^p}/{n^p} \right) = \Omega \left({F_p}/{n^p} \right)$.

 Appealing to Lemma \ref{lem:F0-Fp}, $F_p \geq n^{p-1}/K'$ for some constant $K'$. Hence, $\Prob{\text{$v$ succeeds}} \geq 1/(K' n)$. The probability that all $n$ nodes fail is at most $\left( 1-1/(K' n) \right)^n \leq e^{-1/K'}$. We therefore succeed w.h.p   by repeating $O(\log n)$ times. Given that $v$ succeeds, the probability that it samples value $i$ is $ \left({f_i^p/n^p} \right)/ \left({\sum_{j=1}^N f_j^p/n^p}\right) = {f_i^p}/{F_p}$ as required.
\end{proof}
\paragraph{Approximating $F_p$} The algorithm by Bar-Yossef et al. \cite{Bar-YossefJKST02} that we discuss in Appendix \ref{appendix:adaptation} for approximating $F_0$ up to a $1\pm \eps$ factor w.h.p  can be emulated in the $\gossip$ model in $O(\eps^{-2} \log^2 n)$ rounds. We now focus on approximating higher frequency moments. Let $k \leq p$ be an integer. We present an algorithm that w.h.p  approximates $F_p$ (for $p \geq 2$) in $\tilde{O}\left(\eps^{-2} n^{1-k/p}\right)$ rounds if $F_0 = O\left(n^{1/(k-1)} \right)$. Recall that $F_0$ is at most the number of non-empty nodes. To approximate $F_p$, our algorithm makes use of an approximation of $F_k$ and $\ell_k$-sampling. This generalizes the approach in \cite{AndoniKO11,MonemizadehW10}. We will prove the following theorem.

We first consider the following algorithm that approximates $F_k$. For $j=1,\ldots, C \eps^{-2} \log n$, where $C$ is some sufficiently large constant, in the $j$-th phase, each non-empty node $v$ uses $k-1$ rounds to talk to $k-1$ random nodes $u_1,\ldots,u_{k-1}$. It declares success if $\val{v} = \val{u_1} = \ldots = \val{u_{k-1}}$. Let $I_{j,v}$ be the indicator variable for the event $v$ succeeds in the $j$-th phase.
 Let $T = C \eps^{-2} \log n$. Return the estimate 
 \[
 \hat{F}_k = \frac{n^{k-1}}{T} \cdot \sum_{j=1}^T \sum_{v=1}^n I_{j,v}~.
 \]

We now prove Theorem \ref{thm:fp-gossip}. This theorem first shows that $\hat{F}_k$ is a good approximation w.h.p. Then, it combines  $\hat{F}_k$ with $\ell_k$-sampling to compute a good estimate of $F_p$ in $O\left(\eps^{-2} n^{1-k/p}\log^2 n \right)$ rounds.

\begin{proof}[Proof of Theorem \ref{thm:fp-gossip}]
We again can assume that $F_1 = z = \Omega(n)$ as outlined earlier in this section. We first show that $\hat{F}_k = (1\pm \eps)F_k$ w.h.p. In expectation,
\begin{align*}
\Expect{\hat{F}_k} & = \frac{n^{k-1}}{T} \sum_{j = 1}^T \sum_{v=1}^n \Expect{I_{j,v}}  = \frac{n^{k-1}}{T} \sum_{j = 1}^T \sum_{v=1}^n \frac{f_{\val{v}}^{k-1}}{n^{k-1}}  = \sum_{i=1}^N{f_i \cdot f_i^{k-1}} = F_k~.
\end{align*}
Since the indicator variables $I_{j,v}$ are independent, we can apply Chernoff bound directly.
\begin{align*}
\Prob{\left| \hat{F}_k - F_k \right| \geq \eps F_k} & = \exp\left( -\Omega\left( \frac{T\eps^{2}F_k}{n^{k-1}} \right) \right)  \leq  \exp\left( -\Omega\left( T\eps^{2}\right) \right) \leq 1/\poly(n)~.
\end{align*}
The first inequality follows from Lemma \ref{lem:F0-Fp}  and the second inequality is because $T = C \eps^{-2}\log n$ for some sufficiently large constant $C$. Hence, we can approximate $F_k$ up to a $1 \pm \eps$ factor in $O(\eps^{-2} \log n)$ rounds.

To approximate $F_p$ for $p > k$, we use the following estimator. Let $i$ be an $\ell_k$ sample. We can compute $f_i$ exactly  in $O(\log n)$ rounds. Specifically, each node with value $i$ will put  $1$ on it and 0 otherwise. Then, we can compute the sum using the algorithm in \cite{kempe2003gossip}.  Consider the following estimator:
\[
\hat{F}_p = \hat{F_k} \cdot f_i^{p-k}~.
\]
We rely on the following lemma. We defer the proof to the end of this section.
\begin{lemma}\label{lem:fp-estimator}
We have $\Expect{\hat{F}_p} = (1 \pm O(\eps)) F_p$ and $\Var{\hat{F}_p}  \leq 2 n^{1- k/p} F_p^2$.
\end{lemma}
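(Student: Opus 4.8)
\emph{The plan.} I would condition throughout on the high-probability event $\mathcal{G} = \{\hat{F}_k = (1\pm\eps)F_k\}$ established by the Chernoff bound proved just above, and treat $\hat{F}_k$ as a fixed number lying in $[(1-\eps)F_k,(1+\eps)F_k]$, so that the only remaining randomness is the $\ell_k$-sample $i$. Since $\hat{F}_k$ is produced in phases using fresh randomness independent of the $\ell_k$-sampling, conditioning on $\mathcal{G}$ does not disturb the law of $i$, which stays $\Prob{i=j}=f_j^k/F_k$. The workhorse identity, for any exponent $s\geq 0$, is
\[
\Expect{f_i^{s}} = \sum_{j=1}^N \frac{f_j^k}{F_k}\, f_j^{s} = \frac{F_{k+s}}{F_k}.
\]

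For the expectation I would apply this with $s=p-k$ to get $\Expect{f_i^{p-k}}=F_p/F_k$, whence
\[
\Expect{\hat{F}_p} = \hat{F}_k\cdot\Expect{f_i^{p-k}} = \hat{F}_k\cdot\frac{F_p}{F_k} = (1\pm\eps)F_p = (1\pm O(\eps))F_p,
\]
using $\hat{F}_k=(1\pm\eps)F_k$ on $\mathcal{G}$; the mass on the complementary event is negligible since $\hat{F}_p\leq n^p$ deterministically while $\Prob{\overline{\mathcal{G}}}\leq 1/\poly(n)$. For the variance I would bound $\Var{\hat{F}_p}\leq\Expect{\hat{F}_p^2}$ and reuse the identity with $s=2(p-k)$:
\[
\Expect{\hat{F}_p^2} = \hat{F}_k^2\cdot\Expect{f_i^{2(p-k)}} = \hat{F}_k^2\cdot\frac{F_{2p-k}}{F_k} \leq (1+\eps)^2\, F_k\, F_{2p-k}.
\]

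Everything then reduces to the moment inequality $F_k F_{2p-k}\leq n^{1-k/p}F_p^2$, which I expect to be the crux. I would prove it in two independent pieces. For the high moment, monotonicity of $\ell_s$-norms gives $\|f\|_{2p-k}\leq\|f\|_p$ (as $2p-k\geq p$ since $k\leq p$), i.e.\ $F_{2p-k}\leq F_p^{(2p-k)/p}$. For the low moment, I would invoke the same power-mean inequality used in Lemma \ref{lem:F0-Fp}, applied to $f$ restricted to its $F_0$ nonzero coordinates, namely $\|f\|_k\leq F_0^{1/k-1/p}\|f\|_p$, so that $F_k\leq F_0^{1-k/p}F_p^{k/p}\leq n^{1-k/p}F_p^{k/p}$, where the final step uses $F_0\leq n$. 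Multiplying the two bounds collapses the exponents of $F_p$:
\[
F_k F_{2p-k} \leq n^{1-k/p}F_p^{k/p}\cdot F_p^{(2p-k)/p} = n^{1-k/p}F_p^{2},
\]
and combining with $(1+\eps)^2\leq 2$ yields $\Var{\hat{F}_p}\leq 2 n^{1-k/p}F_p^2$, as required.

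The main obstacle is producing the moment inequality with exactly the factor $n^{1-k/p}$: the exponents of $F_p$ from the two steps must cancel to leave $F_p^2$, so all of the slack has to come from bounding the support size $F_0$ by $n$, and no looser application of H\"older can be afforded. A secondary point to handle carefully is the joint use of independence of $\hat{F}_k$ and $i$ together with the conditioning on $\mathcal{G}$, which is what licenses factoring the expectations above.
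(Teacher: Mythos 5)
Your proposal is correct and follows essentially the same route as the paper's proof: the same expectation identity $\Expect{f_i^{p-k}} = F_p/F_k$, the same second-moment bound $\Var{\hat{F}_p} \leq 2 F_k F_{2p-k}$, and the same two norm inequalities $F_k \leq n^{1-k/p} F_p^{k/p}$ and $F_{2p-k} \leq F_p^{2-k/p}$ multiplied together to give $2n^{1-k/p}F_p^2$. If anything, you are more careful than the paper, which treats the random $\hat{F}_k$ as a fixed $(1\pm O(\eps))F_k$ multiplier without spelling out the conditioning on the concentration event, the independence of $\hat{F}_k$ from the $\ell_k$-sample, or the negligible contribution of the failure event via the deterministic bound $\hat{F}_p \leq n^p$ --- all points you handle explicitly.
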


 Hence, by an application of Chebyshev  bound, if we take the average of  $O\left( n^{1- k/p} \eps^{-2} \right)$ estimators, with constant probability, $\hat{F_p} = (1 \pm \eps)F_p$. We can amplify the success probability to $1-1/\poly(n)$ by the standard median trick, i.e., taking the median of $O(\log n)$ such estimators. 
\end{proof}

\begin{proof}[Proof of Lemma \ref{lem:fp-estimator}]
In expectation,
\begin{align*}
\Expect{\hat{F}_p} & = \hat{F_k} \cdot \sum_{i=1}^N \frac{f_i^k}{F_k} {f}_i^{p-k} \\
& = (1 \pm O(\eps)) F_p~.
\end{align*}
We can bound the variance as follows.
\begin{align*}
\Var{\hat{F}_p} & \leq (1\pm O(\eps))  F_k^2 \sum_{i=1}^N \frac{f_i^k}{F_k} {f}_i^{2(p-k)} \\
& \leq 2 F_k F_{2p-k}~.
\end{align*}
We have $\| f \|_k  \leq n^{1/k-1/p} \| f \|_p$, and therefore $F_k  \leq n^{1- k/p} F_p^{k/p}$. Additionally, $\| f \|_{2p-k} \leq \| f \|_p$ which implies $F_{2p-k} \leq F_{p}^{2-k/p}$. Therefore, $\Var{\hat{F}_p}  \leq 2 n^{1- k/p} F_p^2$.
\end{proof}

\bibliographystyle{alpha}
\bibliography{ref}


\appendix


\section{Upper and Lower Bounds for Approximating $F_p$ in the CONGEST model} \label{appendix:adaptation}
\subsection{Upper bounds}
\paragraph{Approximating $F_0$} We show how to run the algorithm by Bar-Yossef et al. \cite{Bar-YossefJKST02} in the $\congest$ model.  Kuhn et al. \cite{KuhnLS08} also outlined how to run  a different streaming algorithm for estimating $F_0$. However, the number of rounds for a $1 \pm \eps$ approximation w.h.p is unclear in their  paper.

The algorithm is as follows. We pick a pairwise hash function $h: [N]\rightarrow [M]$. Let $t=\roundup{100\eps^{-2}}$, $M = N^3$, and $w$ be the $\roundup{t}$-th smallest value among the hash values $W = \{ h(\val{v}) : v \in V \}$. We have the following:   $tM/w$ is a $1\pm \eps$ approximation of $F_0$ with probability at least 2/3 \cite{Bar-YossefJKST02}.

In the $\congest$ model, the leader can broadcast the hash function. Then, each node $v$ computes $h(\val{v})$. As mentioned above, the nodes can find the $t$-th smallest hash value in $W$ in $O(D+t)$ rounds. Hence, the total number of rounds is $O(D + \eps^{-2})$ for constant success probability. 

To amplify the success probability to $1-1/\poly(n)$, we use $\log n$ different hash functions and take the median of the corresponding estimates. At first, it is unclear how to pipeline this approach on the BFS tree to run in $O(D + \eps^{-2} \log n)$ rounds instead of $O(\log n \cdot (D + \eps^{-2}))$ rounds. We show that this is possible.  The following lemma generalizes Lemma 4.3.1 in  \cite{Peleg:2000}.

\begin{lemma} \label{lem:upcasts}
Suppose each element  belongs to exactly one of $k$ groups. Upcasting the $t$ smallest elements of $k$ groups on a tree $T$  can be performed in $O(\depth(T) + kt)$ rounds in the \congest model.
\end{lemma}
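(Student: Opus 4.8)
The plan is to generalize the standard pipelined upcast of \cite{Peleg:2000} from a single stream to $k$ interleaved streams while paying the $\depth(T)$ pipeline-fill cost only once, rather than once per group. First I would tag every element with its group index $g \in [k]$; since a group index costs $O(\log k)$ bits on top of the $O(\log n)$-bit value, a tagged element still fits in a single $O(\log n)$-bit message. The basic pruning observation is that an element of group $g$ can be globally among the $t$ smallest of its group only if it is among the $t$ smallest of group $g$ within the subtree where it currently sits; hence each node needs to remember, and will ever forward to its parent, at most $t$ elements per group, i.e.\ at most $kt$ elements in total. Call these the \emph{relevant} elements, of which there are at most $kt$.

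Second, I would impose a single total order on all elements to schedule the shared edges: order by within-group rank first, breaking ties by group index, so that the order lists all per-group minima, then all per-group second-smallest elements, and so on. Under this order the relevant elements are exactly the first $kt$. Each node maintains, per group, the sorted list of the $\le t$ smallest elements received from its subtree, and in each round forwards to its parent the smallest not-yet-forwarded element it currently holds in this composite order (safely discarding anything that drops out of the top $t$ of its group). Because a node can compute only within-subtree ranks, I would invoke the elementary fact that an element's rank in the composite order restricted to a subtree is at most its rank in any larger subtree; this guarantees that a globally relevant element is never discarded and that its ``local'' composite position never exceeds its global one.

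The core step, and the main obstacle, is the timing analysis. I would prove by induction on the height of the subtree $T_v$ that node $v$ forwards the $s$-th composite-smallest relevant element of $T_v$ by round $\operatorname{height}(T_v)+s$. The delicate point is that greedy ``send the smallest available unforwarded element'' can transmit elements out of global order, since a smaller element may arrive after a larger one has already been sent; the observation that rescues the bound is that a round is spent on an element larger (in composite order) than a target $e$ only when no element $\le e$ is available and unforwarded, so such rounds never delay the elements up to $e$. Combining this with the inductive arrival times of the children's streams (each child delivers its $j$-th relevant element by round $\operatorname{height}(T_v)-1+j$) and subtree-monotonicity of rank, the $s$ elements up to and including $e$ behave as an undisturbed unit-rate pipeline, so $e$ is forwarded by round $\operatorname{height}(T_v)+s$; the base case of a leaf forwarding its own sorted elements one per round is immediate. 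Applying the claim at the root shows every one of the $\le kt$ relevant elements is received by round $\depth(T)+kt$, yielding the asserted $O(\depth(T)+kt)$ bound, with the per-round $O(\log n)$-bit budget respected throughout.
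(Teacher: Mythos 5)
Your route is genuinely different from the paper's: the paper avoids cross-group interference by construction, running at each node a separate pipelined upcast per group on a fixed round schedule determined by the node's level and the group index (a time-division schedule), so that inside each group the analysis is exactly the classical single-stream induction; you instead interleave all $k$ streams into one greedy pipeline ordered by the composite key (within-group rank, group index). The interleaved design is viable, but the specific observation you use to rescue the timing analysis is false as stated. A node can only compute ranks among elements it has already received, so its \emph{local} composite key for an element can be strictly smaller than the element's true key within the subtree, and greedy can then spend a round on an element whose true key exceeds $e$ while an element with true key $\le e$ sits available and unforwarded. Concretely, let $v$ initially hold $5$ (group $A$) and $2$ (group $B$), with $1$ (group $A$) residing at a descendant. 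The true composite keys in $T_v$ are $(1,A)$ for $1$, $(1,B)$ for $2$, and $(2,A)$ for $5$, so $2$ precedes $5$; but in the first round $v$'s local key for $5$ is $(1,A) < (1,B)$, so greedy forwards $5$ while $2$ waits. Hence your claim that a round goes to an element larger than $e$ only when nothing $\le e$ is available does not hold, and the ``undisturbed unit-rate pipeline'' conclusion does not follow from it.

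The $O(\depth(T)+kt)$ bound itself survives, via a finer induction: by round $\operatorname{height}(T_v)+s$, every element of true composite position at most $s$ in $T_v$ has been forwarded. At round $\operatorname{height}(T_v)+s$ all elements of position less than $s$ have both arrived and been forwarded, and this forces the local key of the position-$s$ element to equal its true key $(r,g)$ and to be strictly minimal among held unforwarded elements: any other held element of some group $g'$ has at least $r-1$ received same-group predecessors (their positions are below $s$), so its local rank is at least $r$; and if its local rank equals $r$ with $g'<g$, the missing rank-$r$ element of $g'$ would itself have position below $s$, a contradiction. You should also prove, rather than assert as elementary, the subtree-monotonicity of composite position: pairwise composite order is \emph{not} preserved when elements are added (in the example, $5$ precedes $2$ within $\{2,5\}$ but follows it within $\{1,2,5\}$), yet the position of a fixed element is monotone, which requires an element-by-element counting argument showing that whenever an element drops out of the set of predecessors, the newcomer takes its place. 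The paper's per-group scheduling sidesteps all of these interactions at the same asymptotic cost, which is precisely what its simpler design buys.
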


\begin{proof} 
Let the levels of the nodes be $1,2,\ldots, \depth(T)$ where the root is at level $\depth(T)$. For each node $v$, from round $\textup{level}(v)+ j$ to $\textup{level}(v) + j + t-1$, it sends the smallest value in group $j$ in its memory that has not been sent before to its parent. 

Suppose  that the $i$-th smallest element of group $j$ is in the subtree $T_v$ (the subtree that is rooted at $v$) originally. Then, we claim that at the end of round $\textup{level}(v) + j + i-1$, it will be stored in $v$. Furthermore, at the end of round $\textup{level}(v) + j + i$, it will be upcasted to $v$'s parent. We prove by triple-induction on $\textup{level}(v), j$ and $i$. The base case where $\textup{level}(v)=1, j=1, i=1$ can easily be checked.

Consider a node $v$ where $\textup{level}(v) = \ell$. Suppose the $i$-th smallest element $x$ of group $j$ is in $T_v$ originally. We first need to show that at the end of round $\ell + j +i - 1$, the element $x$ is sent to $v$. We know that one of $v$'s children, say $u$, must have $x$ in $T_u$; note that $\textup{level}(u) = \ell-1$. By induction, at the end of round $(\ell-1) + j + i$,  we know that $u$ must have sent $x$ to $v$.  It remains to show that at the end of round $\ell + j + i$, $v$ will send $x$ to its parent. By induction, all the $i'$-th smallest elements of group $j$ that are in $T_v$, where $i' < i$, must have been upcasted to $v$'s parent at the end of round $\ell + j + i - 1$. Hence, $v$ must  upcast the $i$-th smallest value of group $j$ to its parent at the end of round $\ell + j + i$. Therefore, after $O(\depth(T) + kt)$ rounds, the root must have all the desired values. It can downcast them back to other nodes in $O(\depth(T) + kt)$ rounds.
\end{proof}

As a result, we can find all $O(\eps^{-2})$ smallest hash values of each of $O(\log n)$ hash functions in $O(D + \eps^{-2}\log n)$ rounds using the BFS tree.

\paragraph{Approximating $F_2$}  To approximate the second frequency moment $F_2$, we adapt the well-known tug-of-war sketch in  \cite{AlonMS99} to the $\congest$ model. The adaptation, using $O(D + \eps^{-2}\log n)$ rounds, for a $1 \pm\eps$ approximation with high probability is quite simple and can be found in \cite{KuhnLS08}.

\paragraph{Approximating $F_p$ (for $p > 2$)} The AMS sketch for $F_p$ in \cite{AlonMS99} can be adapted to the $\congest$ model as follows. First, the leader samples a node $v$ and try to compute $r = \size{\{ u  \in V : u > v \land \val{u} = \val{v} \}}$. We can compute $r$  in $O(D)$ rounds. After the leader broadcasts $v$, we let $v$ broadcasts its value $\val{v}$. Finally, each node $u$ now knows if $u > v$ and $\val{u} = \val{v}$.  Then, we can compute the sum $r = \sum_{u \in V} \indicator{u > v \land \val{u} = \val{v}} $. All of these steps can be done in $O(D)$ rounds. Let $X = n (r^p - (r-1)^p)$. One can argue that $\Expect{X} = F_p$. Furthermore, if we repeat $O( \eps^{-2} \min(n,N)^{1-1/p} \log n)$ times and take the average as the final estimate $\hat{X}$, it can be shown that $\hat{X}$ is a $1\pm \eps$ approximation of $F_p$ with high probability \cite{AlonMS99}.Thus, we obtain an $O(D + \eps^{-2} \min(n,N)^{1-1/p} \log n)$-round algorithm via a careful pipelining.


We summarize the discussion above in Theorem \ref{thm:approximate-congest-f02}.

\subsection{Lower bounds}
\paragraph{$\Omega(D)$ lower bound for approximating $F_p$ up to a $1\pm 0.1$ factor} We now  present lower bounds suggesting that these algorithms are tight. This provides  a clear context and a good motivation for our exact algorithm in Section \ref{sec:exact-algorithm} which is one of our main results.  It is easy to see that  $\Omega(D)$ rounds are required for some constant approximation. Consider $n$ nodes $a_1,\ldots,a_n$ that are connected to one end $c_1$ of a chain $c_1,\ldots,c_D$.  Let the nodes $b_1,\ldots,b_n$ connect to the other end $c_D$ of the chain. Set $\val{a_i} = i$ for all $i$ and $\val{c_j} = \textup{NULL}$ for all $j$. In the first case, $\val{b_i} = i$ and in the second case $\val{b_i} = i + n$. In order for each $a_i$ to distinguish between the two cases, $\Omega(D)$ rounds are clearly needed. In the first case, $F_0 = n$ and in the second case $F_0 = 2n$ (for $p \neq 0$, the first case  corresponds to $F_p = n 2^p$ and the second case corresponds to $F_p = 2n$). A $(1\pm 0.1)$ approximation for $p \in \{0,2,3,\ldots\}$ distinguishes the two cases and therefore requires $\Omega(D)$ rounds.

\paragraph{Conditional $\tilde{\Omega}(\eps^{-2})$ lower bound for approximating $F_p$ up to a $1 \pm \eps$ factor} We now reason why the above algorithms might be optimal for a $1 \pm \eps$ approximation, in terms of $\eps$, based on a conjecture of Brody-Chakrabarti \cite{BrodyC09}. For $F_p$ (where $p \neq 1$ is a fixed), we consider the following communication problem in which Alice and Bob have the sets $A \subseteq [N]$ and $B \subseteq [N]$ respectively. Let $C$ be the multiset formed by $A$ and $B$. The goal is to estimate $F_p(C)$ up to a $1 \pm 1/\sqrt{N}$ factor with some sufficiently large constant success probability. For this problem, Woodruff gave a one-round lower bound $\Omega(N)$ on the number of bits that Alice and Bob need to communicate,  via a reduction from the one-way Gap-Hamming-Distance problem of size $N$ \cite{Woodruff04}. Hence,  if $\eps=\Theta(1/\sqrt{N})$, this implies an $\Omega(\eps^{-2})$ lower bound. 

It is conjectured in \cite{BrodyC09} (see conjecture 2) that the total communication of the Gap-Hamming-Distance problem of size $N$, irrespective of the number of rounds,  must be $\Omega(N)$ bits. Hence, if Alice and Bob can communicate $O(\log n)$ bits in each round, the number of rounds must be $\Omega(\eps^{-2}/\log n)$.  

Assuming their conjecture holds, then in the worst case, $\Omega(\eps^{-2}/\log n)$ rounds are needed in the $\congest$ model to approximate $F_p$ up to a $1 \pm \eps$ factor. To see this, consider a graph $G$ of $2N$ nodes, i.e., $n = \Theta(N)$. If $i \in A$, then Alice sets $\val{i} = i$, otherwise, she sets $\val{i} = \textup{NULL}$. Similarly, if $i \in B$ then Bob sets $\val{N+i} = i$, otherwise, he sets $\val{N+i} = \textup{NULL}$. One example that works is that $L$ and $R$ are cliques and $(1,N+1) \in E$. In fact, the conditional lower bound holds for any graph in which $L$ and $R$  are connected component and $|E(L,R)|=O(1)$.

Observe that $G$ is connected as required, but in each round, only $O(\log n)$ bits can be communicated between $L$ and $R$ via the edge $(1,N+1)$. Hence, if we can approximate $F_p(C)$ up to a $1\pm \eps$ factor in $G$ in $r$ rounds, it means that  Alice and Bob can use this as a protocol to approximate $F_p(C)$  using $O(r \log n)$ bits of communication and  therefore $r = \Omega(\eps^{-2}/\log n)$. In this construction, $n = \Omega(N)$.

\paragraph{$\tilde{\Omega}(N^{1-2/p})$ lower bound for approximating $F_p$ up to a $1\pm 0.1$ factor} We also observe that approximating $F_p$ (for $p > 2$) cannot be done in fewer than $O(N^{1-2/p}/\log n)$ rounds.  This is based on a suitable modification of the reduction in \cite{AlonMS99}. In the $t$-player disjointness problem, player $i$ has the set $A_i$ and the players want to learn (with some sufficiently high constant success probability) if the all the sets are disjoint (YES case) or they intersect at a unique element (NO case), with the promise that one of the two cases happens. In the blackboard model, the players can send messages to a blackboard for others to see. The total size of all the messages must be at least $\Omega(N/t + \log N)$ \cite{ChakrabartiKS03}. Consider a graph with $t$ parts each of which is a clique of $N$ nodes. The $i$-th part encodes $A_i$ in the same fashion above. It is easy to see that if $t > 2^{1/p} N^{1/p}$, then, a $(1 \pm 0.1)$-approximation of $F_p$ can distinguish the YES case ($F_p \leq N$) and the NO case ($F_p \geq t^p > 2 N$). Each part connects with a node $b$ that serves as a blackboard via  a single edge. In each round in the $\congest$ model, the players can send $O(t \log n)$ bits of message to the blackboard in total. Hence, the number of rounds must be $\Omega(N/(t^2 \log n)) = \Omega(N^{1-2/p}/\log n) = \Omega(n^{(1-2/p)/(1+1/p)}/\log n)$, since $n = t \cdot N = \Theta(N^{1+1/p})$. 

\paragraph{An $\tilde{\Omega}(n)$ lower bound for computing $F_p$ exactly} Next, we give a simple unconditional lower bound for computing  $F_p$ exactly. Consider the 2-player disjointness problem. 
In the NO case, we have $F_0 = |A| + |B| -1$ and in the YES case, we have $F_0 = |A| + |B|$. For general $p$, the NO case corresponds to $F_p = |A| + |B| -2+2^p$ and the YES case corresponds to  $F_p = |A| + |B|$. Hence, for $p \neq 1$, given an exact algorithm for $F_p$ in the $\congest$ model that uses $r$ rounds, the two players can use that to solve Disjointness with  $O(r \log n)$ bits of communication (they use 2 extra rounds to send each other $|A|$ and $|B|$). Hence, $r = \Omega(n/\log n)$.

\end{document}